\author{Arne Meier\inst{1}\fnmsep\thanks{Supported by DFG grant ME 4279/1-1.}\and Sebastian Ordyniak\inst{2}\and M. S. Ramanujan\inst{2}\and Irena Schindler\inst{1}\fnmsep$^\star$}
\institute{Leibniz Universität Hannover, \email{$\{$meier.schindler$\}$@thi.uni-hannover.de} \and TU Wien, \email{sordyniak@gmail.com,Ramanujan.Sridharan@ii.uib.no}}
\title{Strong Backdoors for Linear Temporal Logic}
\begin{document}
\maketitle

\begin{abstract}
	In the present paper we introduce the notion of strong backdoors into the field of temporal logic for the CNF-fragment of linear temporal logic introduced by Fisher. We study the parameterised complexity of the satisfiability problem parameterised by the size of the backdoor. We distinguish between backdoor detection and evaluation of backdoors into the fragments of horn and krom formulas. Here we classify the operator fragments of globally-operators for past or future, and the combination of both. Detection is shown to be in FPT whereas the complexity of evaluation behaves different. We show that for krom formulas the problem is paraNP-complete. For horn formulas the complexity is shown to be either fixed parameter tractable or paraNP-complete depending on the considered operator fragment.
\end{abstract}

\section{Introduction}
Temporal logic is one of the most important formalisms in the area of program verification and validation of specification consistency. Most notably are the seminal contributions of Kripke \cite{kri63}, Pnueli \cite{pn77}, Emerson, Clarke, and Halpern \cite{emha85,clem81} to name only a few. There exist several different variants of temporal logic from which the best known ones are the computation tree logic CTL, the linear temporal logic LTL, and the full branching time logic CTL$^*$. In this paper we will consider a specific clausal fragment of LTL which is known as \emph{separated normal form} (SNF) and has been introduced by Fisher \cite{fish91}. This normal form is a generalisation of the conjunctive normal form from propositional logic to linear temporal logic with future and past modalities interpreted over the frame of the integers $(\mathbb Z,<)$. In SNF the formulas are divided into a past, present, and a future part. Technically this normal form is no restriction as always one can translate an arbitrary LTL formula to a satisfiability equivalent formula in SNF \cite{fish91}.

Sistla and Clarke have shown that the satisfiability problem for the logic LTL with its standard operators (\emph{next-time}, \emph{eventually}, \emph{always}, and \emph{until}) is $\PSPACE$-complete \cite{sc82}. Several restrictions of this generic problem have been considered so far: Ono and Nakamura \cite{on80} investigated operator fragments, Chen and Lin \cite{cl93} classified Horn formulas, Demri and Schnoebelen \cite{ds02} pinned the complexity down to tree parameters (temporal operator fragments, temporal depth, and number of propositional variables), Markey \cite{markey04} analysed the use of negation, Dixon~et al.\ \cite{dfk07} introduced an XOR fragment, Bauland et~al.\ \cite{bsssv09} applied the framework of Post's lattice together with operator fragments, and Artale et~al.\ \cite{ArtaleKRZ13} studied the SNF fragment in detail.

Whenever problems become classified for an intractable complexity class as for instance $\NP$ then there are different approaches to overcome this tractability defect. One prominent approach is the framework of parameterised complexity \cite{df13,dofe99}. Here the approach encompasses the identification of a \emph{parameter} of the instance such that if the value of the parameter is assumed to be small then the problem becomes tractable. Such a much desired parameter for propositional logic is the number of variables of a given formula $\varphi$. It is well-known that the (unparameterised) satisfiability problem of this logic is $\NP$-complete \cite{coo71a,levin73} whereas the mentioned parameterised point of view leads to a deterministic algorithm running in time $O(2^{|\Vars{\varphi}|}\cdot|\varphi|)$ and hence is, if $|\Vars{\varphi}|$ is fixed, a polynomial running time. This kind of algorithms leads to a class of problems which is said to be \emph{fixed parameter tractable} and the corresponding complexity class is called $\FPT$. However there is also a notion of intractability in the parameterised world, i.e., the complexity class $\W1$ or more general the $\W{}$-hierarchy. This hierarchy is located above of $\FPT$ and not known to be different. Similarly there is a nondeterministic variant of the class $\FPT$, namely, $\paraNP$ which contains the whole $\W{}$-hierarchy.

Until today many different types of parameterisations for SAT have been considered~\cite{sz03,ccfxjkg04,ops13}. One of these approaches making SAT fixed parameter tractable is the concept of backdoors \cite{GaspersSzeider12}. Informally a backdoor into some class of formulas $\mathcal C$ is a set of variables $X$ from a given formula $\varphi$ such that every assignment with respect to the variables in $X$ applied to $\varphi$ leads to a formula from the class $\mathcal C$. If $\mathcal C$ is then a tractable class of formulas like for instance $\HORN$ or $\KROM$ formulas then again we found such a fruitful parameterisation: the size of $X$. There are different variants of backdoors beyond propositional logic \cite{GaspersSzeider12} which have been considered, e.g., backdoors with respect to answer set programming \cite{fs15} or disjunctive logic programs \cite{fs15b}. It is worth to note that standard parameterisations, e.g., temporal depth, number of propositional variables and formula treewidth, do not help for $\LTL$ \cite{lm15}.

In this paper we introduce a notion of backdoors for $\LTL$ formulas and investigate the parameterised complexity of two problems: backdoor detection and evaluation. The backdoor classes of our interest are $\HORN$ and $\KROM$. Our classification shows that backdoor detection turns out to be in both cases in $\FPT$ whereas backdoor evaluation is a more challenging task: either it is $\FPT$ or $\paraNP$-hard depending on the chosen target formula class and considered operator fragment.

\iflong 
\else
 Some proofs have been omitted due to space restrictions and can be found in the technical report \cite{mso16}.
\fi

\section{Preliminaries}

\paragraph{Parameterised Complexity.} A \emph{parameterised problem}
$\Pi$ is a tuple $(Q,\kappa)$ such that the following
holds. $Q\subseteq\Sigma^*$ is a language over an alphabet $\Sigma$,
and $\kappa\colon\Sigma^*\to\N$ is a computable function; then
$\kappa$ also is called the \emph{parameterisation (of $\Pi$)}. 

If there is a deterministic Turing machine $M$ and a computable
function $f\colon \N \to \N$ s.t.~for every instance $x \in \Sigma^*$
(i) $M$ decides correctly if $x \in Q$, and (ii) $M$ has a runtime
bounded by $f(\kappa(x)) \cdot |x|^{O(1)}$, then we say that $M$ is an
\emph{fpt-algorithm for $\Pi$} and that $\Pi$ is \emph{fixed-parameter
  tractable} (or in the class $\FPT$).

%In the same way we say a function $f$ to be \emph{fpt-computable w.r.t.~a parameter $\kappa$} if there is another computable function $h$ such that $f(x)$ can be computed in time $h(\kappa(x)) \cdot |x|^{O(1)}$.
%
%\TODO{Arne: Brauchen wir fpt-Reduktionen überhaupt? Ich glaube nicht...}
%\begin{definition}[fpt-reduction]
%Let $(P, \kappa)$ and $(Q, \lambda)$ be two parameterised problems over alphabets $\Sigma$, resp., $\Gamma$. Then  a function $f \colon \Sigma^* \to \Gamma^*$ is an \emph{fpt-reduction} if it is fpt-computable w.r.t.~$\kappa$ and there is a computable function $h \colon \N \to \N$ s.t.~it holds for all $x \in \Sigma^*$:
%\begin{itemize}
%  \item $x \in P \iff f(x) \in Q$ and
%  \item $\lambda\big(f(x)\big) \leq h\big(\kappa(x)\big)$, hence $\lambda$ is bounded by $\kappa$.
%\end{itemize}
%
%If there exists an fpt-reduction from $(P, \kappa)$ to $(Q, \lambda)$ then we call $(P, \kappa)$ \emph{fpt-reducible to} $(Q, \lambda)$ and denote this by $(P, \kappa) \leqfpt (Q, \lambda)$.
%\end{definition}

The class $\paraNP$ contains all parameterised problems $(Q,\kappa)$ for which there is a computable function $f$ and an non-deterministic Turing machine deciding if $x \in Q$ holds in time $f(\kappa(x)) \cdot |x|^{O(1)}$. One way to show $\paraNP$-hardness of some parameterised problem $(Q,\kappa)$ is to show that $Q$ is $\NP$-hard for one specific, fixed value of $\kappa$, i.e., there exists a constant $\ell\in\N$ such that $(Q, \kappa)_\ell \dfn \{ x \mid x \in Q \text{ and } \kappa(x) = \ell \}$ is $\NP$-hard.

\paragraph{Temporal Logic.} We directly define the relevant fragment of well-formed formulas of linear temporal logic $\LTL$ in separated normal form (SNF) and stick to the notion of Artale et~al.~\cite{ArtaleKRZ13}.
\begin{align}
	\lambda &\ddfn \bot \mid p \mid \BoxF\lambda \mid \BoxP\lambda \mid \Boxstar\lambda,\\
	\varphi &\ddfn \lambda \mid \lnot\lambda \mid \varphi\land\varphi \mid \Boxstar(\lnot\lambda_1\lor\cdots\lor\lnot\lambda_n\lor\lambda_{n+1}\lor\cdots\lambda_{n+m}),
\end{align}
where $\lambda$ is also called \emph{temporal literal} and $\varphi$ is in \emph{clausal normal form}. The operators $\BoxF$, $\BoxP$, and $\Boxstar$ are read \emph{always in the future}, \emph{always in the past}, and \emph{always}.

We interpret the $\LTL$ formulas over the flow of time $(\Z,<)$ \cite{ghr94}. 
\begin{definition}[temporal semantics]
  Let $\PROP$ be a countable infinite set of propositions. A
  \emph{temporal interpretation} $\M=(\Z,<,V)$ is a mapping from
  propositions to moments of time, i.e.,
  $V\colon\PROP\to\Z$. The satisfaction relation $\models$ is
  then defined as follows where $n\in\Z$, $\varphi,\psi\in\LTL$

  \begin{tabbing}
    $\M,n\models \varphi\lor\psi$ \= iff \= $\M,n\models\varphi$ or $\M,n\models\psi$\kill
    $\M,n\models p$ \> iff \> $n\in V(p)$,\\
    $\M,n\models \varphi\lor\psi$ \> iff \> $\M,n\models\varphi$ or $\M,n\models\psi$\\
    $\M,n\models \lnot\varphi$ \> iff \> $\M,n\not\models\varphi$\\
    $\M,n\models \BoxF\varphi$ \> iff \> for all $k>n$ it holds $\M,k\models\varphi$\\
    $\M,n\models \BoxP\varphi$ \> iff \> for all $k<n$ it holds $\M,k\models\varphi$\\
    $\M,n\models \Boxstar\varphi$ \> iff \> for all $k\in\Z$ it holds $\M,k\models\varphi$
  \end{tabbing}
  
  We say that $\varphi$ is \emph{satisfiable} if there exists a
  temporal interpretation $\M$ such that
  $\M,0\models\varphi$. Then $\M$ is also referred to as a
  \emph{(temporal) model (of $\varphi)$}.
\end{definition}

Sometimes we also directly write $\M(p)$ instead of explicitly defining $V$. Note that the operator name $\mathsf{G}$ instead of $\BoxF$ often occurs in literature. 
We distinguish fragments of $\LTL$ by adding superscripts and subscripts as follows. If $O\subseteq\{\BoxF,\BoxP,\Boxstar\}$ is an operator subset then $\LTL^O$ is the fragment of $\LTL$ consisting only of formulas that are allowed to use temporal operators from $O$. We also consider restrictions on the clausal normal form for 
$
\Boxstar(\lnot\lambda_1\lor\cdots\lor\lnot\lambda_n\lor\lambda_{n+1}\lor\cdots\lambda_{n+m})
$ in (2). Table~\ref{tbl:forms} lists the relevant cases for this study. If $\alpha\in\{\CNF,\HORN,\KROM\}$ then $\LTL_\alpha$ is the set of formulas using the normal form $\alpha$.

\begin{table}[t]
	$$\begin{array}{lp{7cm}l}\toprule
		\text{class} &\text{description}& \text{restrictions on }n,m\\\midrule
		$\CNF$ &no restrictions on (2)& -\\
		$\HORN$&at most one positive temporal literal& m\leq 1\\
		$\KROM$&binary clauses& n+m\leq 2\\
	\bottomrule
	\end{array}$$
	
	\caption{Considered normal forms. Restrictions refer to equation (2).}\label{tbl:forms}
\end{table}

As shown by Fisher et~al.~every $\LTL$ formula considered over the frame $(\mathbb Z,<)$ has an satisfiability-equivalent formula in SNF \cite{fdp01}. 

\begin{lemma}[{\cite[Lemma 2]{ArtaleKRZ13}}]\label{lem:normalform}
	Let $\mathcal L\in\{\LTL_\alpha^{\{\BoxF,\BoxP\}},\LTL_\alpha^{\supBoxstar}\}$ for $\alpha\in\{\CNF,\HORN,\KROM\}$. For any formula $\varphi\in\mathcal L$, one can construct, in log-space, an satisfiability-equivalent $\mathcal L$-formula $\Psi\land\Boxstar\Phi$, where $\Psi$ is a conjunction of propositional variables from $\Phi$, and $\Phi$ is a conjunction of clauses of the form (2) containing only $\BoxF,\BoxP$ for $\LTL_\alpha^{\BoxF,\BoxP}$, and only $\Boxstar$ for $\LTL_\alpha^{\supBoxstar}$, in which the temporal operators are not nested.
\end{lemma}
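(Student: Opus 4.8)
The plan is to reach the target shape $\Psi\land\Boxstar\Phi$ in two phases: first eliminate every nesting of temporal operators by a Tseitin-style renaming, and then split the resulting formula into a purely global part (the clauses of $\Phi$, each asserted at every moment by the outer $\Boxstar$) and a purely present part (the conjunction $\Psi$ of propositional variables, asserted at the origin). Throughout I keep two invariants: every clause I emit is of the form~(2) and uses only the temporal operators permitted by the input fragment, and the $\HORN$/$\KROM$ restriction on $n,m$ from Table~\ref{tbl:forms} is preserved.

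For the flattening phase I would proceed bottom-up over the parse trees of the temporal literals, so that when I treat a literal $T\mu$ with $T\in\{\BoxF,\BoxP,\Boxstar\}$ its argument $\mu$ has already been reduced to a single proposition $q$. I then introduce a fresh proposition $x$, replace the occurrence of $Tq$ by $x$, and record the defining constraint $\Boxstar(x\leftrightarrow Tq)$. This biconditional splits into the two clauses $\Boxstar(\lnot x\lor Tq)$ and $\Boxstar(x\lor\lnot Tq)$, both of the form~(2), both containing exactly one positive temporal literal (hence $\HORN$) and exactly two literals (hence $\KROM$), and both using only the operator $T$ already available in the fragment. Because the definitions are full equivalences, a model of the renamed formula restricts to a model of the original, while conversely every model of the original extends by evaluating the fresh variables through their definitions; hence satisfiability is preserved, and after the pass no temporal operator occurs inside another.

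After flattening, the formula is a conjunction of (i) global clauses $\Boxstar C_j$, which already belong to $\Boxstar\Phi$, and (ii) top-level temporal literals and negated temporal literals $L_1,\dots,L_r$, now all unnested, that are required to hold only at the origin. For each $L_i$ I introduce a fresh proposition $w_i$, put $w_i$ into $\Psi$, and add the clause $\Boxstar(w_i\to L_i)$, that is $\Boxstar(\lnot w_i\lor L_i)$, to $\Phi$; note that every variable of $\Psi$ then indeed occurs in $\Phi$, as required. The crucial observation---and the step I expect to be the main obstacle---is that this faithfully localises $L_i$ to the origin although the added clause is global: since a formula is satisfiable iff $\M,0\models\varphi$ for some $\M$, a model of the original formula can be extended by setting $w_i$ true at $0$ and false everywhere else, so that $\Boxstar(w_i\to L_i)$ constrains only time $0$; conversely $w_i\in\Psi$ forces $w_i$ at the origin, and the clause then forces $L_i$ there. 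One checks once more that $\lnot w_i\lor L_i$ is of the form~(2), is $\HORN$ (at most the single positive literal contributed by $L_i$) and $\KROM$ (two literals), and introduces no new operator.

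Finally I would establish the log-space bound: the construction is a single traversal of $\varphi$ that emits, for each temporal subterm and each top-level obligation, a constant number of clauses over fresh names derived from the position of the subterm, for which an $O(\log|\varphi|)$ counter suffices. The two places that demand genuine care are the class bookkeeping for the definitional clauses, which I verified stays within $\HORN$ and $\KROM$, and the localisation of the present obligations into the otherwise global formula; the remaining steps are routine.
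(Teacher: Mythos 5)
This lemma is not proved in the paper at all: it is stated as an imported result, cited verbatim from Artale et~al.\ \cite[Lemma~2]{ArtaleKRZ13}, so there is no in-paper argument to compare against. Your reconstruction is correct and follows essentially the same route as the cited proof: the standard renaming (structural transformation) technique, flattening nested temporal literals by fresh variables with definitional clauses $\Boxstar(\lnot x\lor Tq)$ and $\Boxstar(x\lor\lnot Tq)$ --- each of which has exactly one positive literal and two literals in total, hence stays inside $\HORN$ and $\KROM$ --- followed by localising the top-level (origin) obligations through fresh variables $w_i$ placed in $\Psi$ with guard clauses $\Boxstar(\lnot w_i\lor L_i)$, which is sound precisely because $w_i$ can be made true at time $0$ only. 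Two cosmetic remarks: for each renamed occurrence only the polarity-relevant half of the biconditional is actually needed (though using both halves is harmless, as you verified), and one should note the degenerate case where the input has no top-level literals, in which $\Psi$ is the empty conjunction.
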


In the following we consider all formulas given in that normal form. Therefore we assume that all formulas in $\LTL^O_\alpha$ obey this normal form.

\section{Introduction of strong backdoors for LTL}\label{sec:backdoor}

In the following we will introduce a notion of backdoors for formulas in linear temporal logic. 
The definition of these backdoors turns out to be very similar to the definition of the corresponding backdoor sets for propositional formulas. The main difference is that whenever a propositional variable is in the backdoor set then also all of its temporal literals is required to be in the backdoor set as well. A consequence of this is that in contrast to propositional formulas, where a backdoor set needs to consider all assignments of the backdoor set variables, we only need to consider assignments that are consistent between propositional variables and their temporal literals.

Let $\OOO$ be a set of operators. An assignment $\theta\colon \Vars\phi \cup \SB x \in \Vars\phi \land Ox \SM O \in \OOO\SE$ is \emph{consistent} if for every $x\in\Vars\phi$ it holds that if $\theta(\Boxstar x)=1$, then also $\theta(\Box_P x)=1$, $\theta(\Box_F x)=1$, and $\theta(x)=1$.

\begin{definition}[Backdoors]
  Let $\mathcal C$ be a class of $\CNF$-formulas, $\mathcal O$ be a set of operators, and $\phi$ be a $\LTL^{\mathcal O}_\CNF$ formula. 
  A set $X\subseteq\Vars{\phi}$ is a \emph{(strong) $\mathcal C$-backdoor} if for every consistent assignment $\theta\colon X\cup\{Ox\mid x\in X, O\in\mathcal O\}\to\{0,1\}$ it holds that $\phi[\theta]$ is in $\mathcal C$.
\end{definition}

The \emph{reduct} $\phi[\theta]$ is defined similar as for usual $\CNF$-formulas, i.e., all clauses which contain a satisfied literal are deleted, and all falsified literals are deleted from their clauses. Here empty clauses are substituted by false, and the empty formula by true. Sometimes if the context of $\mathcal O$ is clear we omit stating it and just mention the backdoor class $\mathcal C$.

In order to exploit backdoor sets to obtain efficient (FPT) algorithms for LTL one needs
to accomplish two tasks: First one needs to find a small backdoor set and then one needs
to show how the backdoor set can be exploited to efficiently evaluate the formula. 
This leads to the following problem definitions for
every class $\mathcal C$ of formulas and set of operators
$\mathcal O$.
\begin{description}
	\item[Problem:] $\BDeval{\mathcal O}{\mathcal C}$ --- Backdoor evaluation to $\LTL^{\mathcal O}_{\mathcal{C}}$.
	\item[Input:] $\LTL^{\mathcal O}_{\CNF}$ formula $\phi$, strong $(\mathcal C, \mathcal O)$-backdoor $X$.
	\item[Parameter:] $|X|$.
	\item[Question:] Is $\phi$ satisfiable?
\end{description}

\begin{description}
	\item[Problem:] $\Detect{\mathcal O}{\mathcal C}$ --- Backdoor detection to $\LTL^{\mathcal O}_{\mathcal{C}}$.
	\item[Input:] $\LTL^{\mathcal O}_{\CNF}$ formula $\phi$, integer $k\in\N$.
	\item[Parameter:] $k$.
	\item[Task:] Find a strong $\mathcal C$-backdoor of size $\leq k$ if one exists.
\end{description}

Of course this approach is only meaningful if one considers target classes which have polynomial time solvable satisfiability problems. Artale et~al.\ have shown \cite{ArtaleKRZ13} that satisfiability for $\LTLstar_{\HORN}$ and $\LTLstar_\KROM$ are solvable in $\P$. Adding $\BoxF,\BoxP$ to the set of allowed operators makes the $\KROM$ fragment $\NP$-complete whereas for $\HORN$ formulas the problem stays in $\P$. Therefore we will consider in the following only $\KROM$ and $\HORN$ formulas.

\section{Backdoor set detection}

In this section we show that finding strong $\CCC$\hy backdoor sets is fixed-parameter tractable if $\CCC$ is either $\HORN$, or $\KROM$. The algorithms that we will present are very similar to the algorithms that are known for the detection of strong backdoors for propositional CNF formulas~\cite{GaspersSzeider12}.

We first show how to deal with the fact that we only need to consider consistent assignments. The following observation is easily witness by the fact that if one of $\BoxP x,\BoxF x, x$ does not hold then $\lnot\Boxstar x$ is true.
\begin{Observation}\label{obs:tauto}
  Let $\phi\dfn\Psi \land \Phi$ be an $\LTL^{\Box_P,\Box_F,\supBoxstar}$ formula. Then any clause $C$ of $\Phi$ containing $\lnot\Boxstar x$ and (at least) one of $\Box_P x$, $\Box_F x$ or $x$ for some variable $x\in\Vars\phi$ is tautological and can thus be removed from $\phi$
  (without changing the satisfiability of $\phi$).
\end{Observation}

Observe that the tautological clauses above are exactly the clauses that are satisfied by every consistent assignment. It follows that once these clauses are removed from the formula, it holds that for every clause $C$ of $\phi$ there is a consistent assignment $\theta$ such that $C$ is not satisfied by $\theta$. This observation will be crucial for our detection algorithms described below.
\begin{theorem}\label{thm:detect-horn}
  For every $\OOO \subseteq \{\Boxstar, \Box_P, \Box_F\}$, $\Detect{\OOO}{\HORN}$ is in $\FPT$.
\end{theorem}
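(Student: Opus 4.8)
The plan is to characterize exactly when a clause becomes non-$\HORN$ under a consistent assignment, and then to reduce strong $\HORN$-backdoor detection to a bounded-branching search analogous to the propositional case of Gaspers and Szeider. Recall that a clause of the reduct $\phi[\theta]$ fails to be $\HORN$ precisely when it retains at least two positive temporal literals after the assignment. By the remark following Observation~\ref{obs:tauto}, after deleting tautological clauses we may assume that for every clause $C$ there is some consistent assignment leaving $C$ unsatisfied, so ``obstruction'' clauses cannot be vacuously repaired. The key structural fact I would establish first is: a set $X \subseteq \Vars\phi$ is a strong $\HORN$-backdoor if and only if for every clause $C$ of $\Phi$, the positive temporal literals of $C$ whose underlying variable lies \emph{outside} $X$ number at most one. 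Intuitively, a positive literal $Ox$ (or $x$) with $x \in X$ can always be removed by a consistent assignment setting that literal to $0$, so only the positive literals on variables outside $X$ are ``uncontrollable'' and can survive to violate the $\HORN$ restriction $m \le 1$.

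Granting this characterization, the detection problem becomes a hitting-set style problem with bounded obstructions. First I would compute, for each clause $C$, the set $\mathrm{pos}(C)$ of variables appearing in a positive temporal literal of $C$. A clause is \emph{bad} (for a candidate $X$) exactly when $|\mathrm{pos}(C) \setminus X| \ge 2$. So $X$ is a $\HORN$-backdoor iff $X$ hits all but at most one element of $\mathrm{pos}(C)$ for every clause $C$. The standard approach is bounded search tree branching: while there exists a clause $C$ with $|\mathrm{pos}(C) \setminus X| \ge 2$, pick two witnessing variables $x, y \in \mathrm{pos}(C) \setminus X$ and branch on adding $x$ to $X$ or adding $y$ to $X$ (since at least one of the two must go into any backdoor of size $\le k$ in order to reduce this clause's surviving positive count to $\le 1$). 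Each branch increases $|X|$ by one, so the recursion depth is bounded by $k$ and the branching factor is $2$, giving a search tree of size $O(2^k)$; at each node the clause scan costs polynomial time, yielding total running time $O(2^k \cdot |\phi|^{O(1)})$, which is fpt.

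The main obstacle I anticipate is proving the characterization correct, in particular the soundness of restricting attention to positive literals on variables outside $X$. For the forward direction one must verify that any positive literal $Ox$ with $x \in X$ genuinely disappears under the consistent assignment that sets it false — here consistency matters, because setting $\Boxstar x = 0$ does not force $\Box_F x, \Box_P x, x$ to any particular value, whereas setting $\Boxstar x = 1$ forces all of them to $1$, so the relevant direction of the consistency constraint is exactly the one that lets us switch a positive literal \emph{off} freely. For the reverse (completeness) direction, one must exhibit, for any clause $C$ with two surviving positive literals $Ox, O'y$ with $x,y \notin X$, a consistent assignment $\theta$ on $X$ and its operator-literals under which $C$ is not reduced to a $\HORN$ clause; since $x,y \notin X$, these literals are untouched by $\theta$, and one invokes the post-Observation~\ref{obs:tauto} property to fix the remaining literals of $C$ so that $C$ survives with both positive literals intact. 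A secondary subtlety is that the same reasoning must hold uniformly for every $\OOO \subseteq \{\Boxstar,\Box_P,\Box_F\}$; I expect the argument to be insensitive to which operators are present, since the $\HORN$ restriction counts positive \emph{temporal} literals regardless of which box decorates them, but I would state the counting over all operator-decorated copies of each variable to be safe.
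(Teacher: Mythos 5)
Your structural characterization (a set $X$ is a strong $\HORN$-backdoor iff every clause has at most one positive temporal literal whose underlying variable lies outside $X$) is correct, and it is exactly the fact underlying the paper's proof: the paper reduces detection to $\VC$ by building a graph on $\Vars\phi$ with an edge $\{x,y\}$ whenever some clause contains two positive literals among $\{x,y\}\cup\{Ox,Oy \mid O\in\OOO\}$, and your bounded search tree is precisely the standard $O(2^k)$ vertex-cover branching run implicitly on that graph. Both directions of your equivalence are argued as in the paper, with the non-trivial direction resting on the remark following Observation~\ref{obs:tauto}, which you invoke correctly.

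There is, however, a genuine gap between your literal-counting characterization and your variable-counting algorithm. You define $\mathrm{pos}(C)$ as a \emph{set of variables} and declare $C$ bad only when $|\mathrm{pos}(C)\setminus X|\geq 2$. This misses clauses whose two positive temporal literals share the same underlying variable: for $\Box_F\in\OOO$, consider the clause $\Boxstar(\lnot y \lor x \lor \Box_F x)$. Here $\mathrm{pos}(C)=\{x\}$, so your loop never flags this clause and may terminate with a set $X$ not containing $x$; yet by your own characterization any strong $\HORN$-backdoor must contain $x$, since a consistent assignment falsifying the $X$-literals of $C$ (which exists by the post-Observation~\ref{obs:tauto} property) leaves both positive copies of $x$ intact, so the reduct is not $\HORN$. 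Your algorithm is therefore not sound as stated: it can report sets that are not backdoors. This is exactly the case the paper handles by placing a \emph{self-loop} at $x$ in the vertex-cover instance, which forces $x$ into every cover. The repair is simple: perform the bad-clause test over positive literal occurrences (counting all operator-decorated copies), and when the two witnessing positive literals lie over the same variable outside $X$, do not branch but add that variable to $X$ as forced. With this fix the branching remains binary with depth at most $k$, and the claimed $O(2^k\cdot|\phi|^{O(1)})$ running time, hence membership in $\FPT$, goes through.
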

\begin{proof}
  Let $\OOO \subseteq \{\Boxstar, \Box_P, \Box_F\}$.
  We will reduce $\Detect{\OOO}{\HORN}$ to the problem $\VC$ which is well-known to be fixed-parameter tractable  (parameterised by the solution size) and which can actually be solved very efficiently in time $O(1.2738^km)$~\cite{ChenKX10}, where $k$ is the size of the vertex cover and $m$ the number of edges in the input graph. 
  Recall that given an undirected graph $G$ and an integer $k$, $\VC$ asks whether there is a subset $C \subseteq V(G)$ of size at most $k$ (which is called a vertex cover of $G$) such that $C \cap e \neq \emptyset$ for every $e \in E(G)$.
  Given an $\LTL^{\OOO}$ formula $\phi\dfn\Psi \land \Boxstar \Phi$, we will construct an undirected graph $G$ such that $\phi$ has a strong $\HORN$-backdoor of size at most $k$ if and only if $G$ has a vertex cover of size at most $k$. The graph $G$ has vertices $\Vars{\phi}$ and there is an edge between two vertices $x$ and $y$ in $G$ if and only if there is a clause that contains at least two literals from $\{x,y\} \cup \SB Ox, Oy \SM O \in \OOO \SE$. Note that if $x=y$, the graph $G$ contains a self-loop.
  We claim that a set $X \subseteq \Vars\phi$ is a strong $\HORN$-backdoor if and only if $X$ is a vertex cover of $G$.

  Towards showing the forward direction, let $X \subseteq \Vars{\phi}$ be a strong $\HORN$-backdoor set of $\phi$. 
  We claim that $X$ is also a vertex cover of $G$. Suppose for a contradiction that $X$ is not a vertex cover of $G$, i.e., there is an edge $\{x,y\} \in E(G)$ such that $X \cap \{x,y\}=\emptyset$. Because $\{x,y\} \in E(G)$, we obtain that there is a clause $C$ in $\Phi$ that contains at least two literals from $\{x,y\} \cup \SB Ox,Oy \SM O \in \OOO \SE$.
  Moreover, because of Observation~\ref{obs:tauto} there is a consistent assignment $\theta\colon X\cup\SB O x\mid x\in X \land O \in \OOO\}\to\{0,1\}$ that falsifies all literals of $C$ over variables in $X$. Consequently, $\phi[\theta]$ contains a sub-clause of $C$ that still contains at least two literals from $\{x,y\} \cup \SB Ox,Oy \SM O \in \OOO \SE$. Hence, $\phi[\theta] \notin \HORN$, contradicting our assumption that $X$ is a strong $\HORN$-backdoor set of $\phi$.
  
  Towards showing the reverse direction, let $X \subseteq V(G)$ be a vertex cover of $G$. We claim that $X$ is also a strong $\HORN$-backdoor of $\phi$. Suppose for a contradiction that this is not the case, then that there is an (consistent) assignment $\theta\colon X\cup\{O x\mid x\in X \land O \in \OOO\}\to\{0,1\}$ and a clause $C$ in $\phi[\theta]$ containing two positive literals say over variables $x$ and $y$. We obtain that $C$ contains at least two positive literals from $\{x,y\} \cup \SB Ox,Oy \SM O \in \OOO \SE$ and hence $G$ contains the edge $\{x,y\}$, contradicting our assumption that $X$ is a vertex cover of $G$.\qed
\end{proof}

\begin{theorem}\label{thm:detect-krom}
  For every $\OOO \subseteq \{\Boxstar, \Box_P, \Box_F\}$, $\Detect{\OOO}{\KROM}$ is in $\FPT$.
\end{theorem}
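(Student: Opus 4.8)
The plan is to generalise the reduction used in the proof of Theorem~\ref{thm:detect-horn}. There, a $\HORN$\hy violation is caused by \emph{two} surviving positive literals, which reduces to $\VC$ (equivalently, $2$\hy\textsc{Hitting Set}). A $\KROM$\hy violation is instead caused by \emph{any three} surviving literals, regardless of sign, so I would reduce $\Detect{\OOO}{\KROM}$ to the $3$\hy\textsc{Hitting Set} problem (hyperedges of size at most three), which is well known to be fixed\hy parameter tractable when parameterised by the solution size, e.g.\ by a bounded search tree that branches on the at most three vertices of an unhit hyperedge. As in the $\HORN$ case, I would first apply Observation~\ref{obs:tauto} to delete all tautological clauses, after which every remaining clause $C$ admits a consistent assignment falsifying all of its literals; restricting such an assignment to any subset of variables again yields a consistent assignment that falsifies exactly the literals of $C$ over those variables.

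The structural heart of the argument is the characterisation: a set $X\subseteq\Vars\phi$ is a strong $\KROM$\hy backdoor of $\phi$ if and only if every clause $C$ of $\Phi$ contains at most two literals whose underlying variable lies outside $X$. The reason is that, for a fixed consistent $\theta$ on $X$ and its operators, a clause not satisfied by $\theta$ reduces to precisely the set of its literals over variables outside $X$ (every literal over an $X$\hy variable is either satisfied, deleting the whole clause, or falsified and thus removed). Hence $\phi[\theta]$ is $\KROM$ for \emph{every} consistent $\theta$ exactly when no clause ever leaves three such literals.

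Given this characterisation I would build a hypergraph $H$ on vertex set $\Vars\phi$ whose hyperedges are the variable sets $\SB \mathrm{var}(\ell_1),\mathrm{var}(\ell_2),\mathrm{var}(\ell_3) \SM \ell_1,\ell_2,\ell_3 \text{ are three literals of a common clause of } \Phi \SE$. Each hyperedge has size at most three; a variable occurring at least three times in a single clause yields a singleton hyperedge forcing that variable into $X$ (the hypergraph analogue of the self\hy loops appearing in Theorem~\ref{thm:detect-horn}), and two literals sharing a variable collapse a triple to a size\hy two hyperedge. Clearly $H$ has at most $O(|\phi|^3)$ hyperedges and is computable in polynomial time. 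I would then prove that $X$ is a strong $\KROM$\hy backdoor of $\phi$ if and only if $X$ is a hitting set of $H$: the reverse direction is immediate, since hitting every triple leaves at most two literals over non\hy$X$ variables in any clause, so every reduct is $\KROM$; the forward direction is the contrapositive, where an unhit hyperedge is realised as an actual $\KROM$\hy violation in some reduct by taking the consistent assignment on $X$ that falsifies all $X$\hy literals of the corresponding clause (supplied by Observation~\ref{obs:tauto}), which leaves the three literals of the unhit triple intact. Running a $3$\hy\textsc{Hitting Set} fpt\hy algorithm with parameter $k$ on $H$ then decides existence of, and produces, a backdoor of size at most $k$.

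The step I expect to be the main obstacle is the forward direction of this equivalence, namely arguing that the consistency constraint on $\theta$ never prevents us from simultaneously falsifying all $X$\hy literals of a clause while keeping a chosen unhit triple alive. This is exactly what deleting the tautological clauses in Observation~\ref{obs:tauto} buys us, since afterwards every clause can be fully falsified by a consistent assignment and the restriction of that assignment to $X$ remains consistent. The secondary care point is the bookkeeping for literals that share a variable: these must be accounted for so that forced variables (singleton hyperedges) and paired variables (size\hy two hyperedges) are included, ensuring the hitting\hy set instance faithfully captures the ``at most two surviving literals'' condition.
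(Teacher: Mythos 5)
Your proposal is correct and follows essentially the same route as the paper: both reduce $\Detect{\OOO}{\KROM}$ to $3$\hy\textsc{Hitting Set} over the universe $\Vars\phi$, with hyperedges given by the variable sets of triples of literals occurring in a common clause, and both use Observation~\ref{obs:tauto} in exactly the same way to produce, for any non-tautological clause, a consistent assignment falsifying all of its literals over the backdoor variables (your explicit ``at most two surviving literals outside $X$'' characterisation is just the paper's two-direction equivalence stated up front, and your collapsed singleton/pair hyperedges are already subsumed by the paper taking $\Vars{C}$ for each three-literal set $C$).
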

\begin{proof}
  Let $\OOO \subseteq \{\Boxstar, \Box_P, \Box_F\}$.
  We will reduce $\Detect{\OOO}{\KROM}$ to the $\threeHS$ problem, which is well-known to be fixed-parameter tractable (parameterised by the solution size)~\cite{AbuKhzam10}. Recall that given a universe $U$, a family $\FFF$ of subsets of $U$ of size at most three, and an integer $k$, $\threeHS$ asks whether there is a subset $S \subseteq U$ of size at most $k$ (which is called a hitting set of $\FFF$) such that $S \cap F \neq \emptyset$ for every $F \in \FFF$. Given an $\LTL^{\OOO}$ formula $\phi\dfn\Psi \land \Boxstar \Phi$, we will construct a family $\FFF$ of subsets (of size at most three) of a universe $U$ such that $\phi$ has a strong $\KROM$\hy backdoor of size at most $k$ if and only if $\FFF$ has a hitting set of size at most $k$. The universe $U$ is equal to $\Vars\phi$ and $\FFF$ contains the set $\Vars{C}$ for every set $C$ of exactly three literals contained in some clause of $\Phi$. 
  We claim that a set $X \subseteq \Vars\phi$ is a strong $\KROM$\hy backdoor if and only if $X$ is a hitting set of $\FFF$.

  Towards showing the forward direction, let $X \subseteq \Vars\phi$ be a strong $\KROM$\hy backdoor set of $\phi$ and suppose for a contradiction that there is a set $F \in \FFF$ such that $X \cap F=\emptyset$. It follows from the construction of $\FFF$ that $\Phi$ contains a clause $C$ containing at least three literals over the variables in $F$. Moreover, because of Observation~\ref{obs:tauto} there is a consistent assignment $\theta\colon X\cup\SB O x\mid x\in X \land O \in \OOO\}\to\{0,1\}$ that falsifies all literals of $C$ over variables in $X$. Consequently, $\phi[\theta]$ contains a sub-clause of $C$ that still contains at least three literals over the variables in $F$. Hence, $\phi[\theta]\notin \KROM$, contradicting our assumption that $X$ is a strong $\KROM$\hy backdoor set of $\phi$.
  
  Towards showing the reverse direction, let $X \subseteq U$ be a hitting set of $\FFF$ and suppose for contradiction that there is an (consistent) assignment $\theta\colon X\cup\{O x\mid x\in X\land O \in \OOO\}\to\{0,1\}$ and a clause $C$ in $\phi[\theta]$ containing at least three literals. Let $C'$ be a set of at exactly three literals from $C$. It follows from the construction of $\FFF$, that $\FFF$ contains the set $\Vars{C'}$, however, $\Vars{C'} \cap X=\emptyset$ contradicting our assumption that $X$ is a hitting set of $G$.\qed
\end{proof}

Now we have seen that the detection of both classes is very simple, i.e., in $\FPT$ time. Next we turn towards to the backdoor set evaluation problem and will first investigate the for the class $\HORN$ the problem lies in $\FPT$.

\section{Backdoor set evaluation}
\subsection{Formulas using only the always operator}

We showed in the previous section that strong backdoors can be found to the classes $\HORN$ and $\KROM$ in $\FPT$ time. This result holds independently of the considered temporal operators. In this section we will consider the question whether once a backdoor set is found it can also be used to efficiently decide the satisfiability of a formula in the case of formulas restricted to the $\Boxstar$ operator.
We will show that this is indeed the case for the class of $\HORN$ formulas but not for $\KROM$ formulas. Our tractability result for $\HORN$ formulas largely depends on the special semantics of formulas restricted to the $\Boxstar$ operators. Hence, we will start by showing some properties of these formulas necessary to obtain our tractability result.

\iflong
Let $\M=(\Z,<,V)$ be a temporal interpretation. We denote by $\Vars\M$ the set of propositions (in the following referred to by variables) for which $V$ is defined. For a set of variables $X \subseteq \Vars{\M}$, we denote by $\M_{|X}$ the \emph{projection} of $\M$ onto $X$, i.e., the temporal interpretation $\M_{|X}=(\Z,<,V_{|X})$, where $V_{|X}$ is only defined for the variables in $X$ and $V_{|X}(x)=V(x)$ for every $x \in X$.  
For an integer $z$, we denote by $\assign(\M,z)$ the assignment $\theta \colon \Vars\M \rightarrow\{0,1\}$ holding at world $z$ in $\M$, i.e., $\theta(v)=1$ if and only if $z \in \M(v)$ for every $v \in \Vars\M$. Moreover, for a set of worlds $Z \subseteq \Z$ we denote by $\assign(\M,Z)$ the set of all assignments ocurring in some world in $Z$ of $\M$, i.e., $\assign(\M,Z)\dfn\SB \assign(\M,z) \SM z \in Z \SE$. We also set $\assign(\M)$ to be $\assign(\M,\Z)$.
For an assignment $\theta \colon X \to \{0,1\}$, we denote by $\worlds(\M,\theta)$ the set of all worlds $z \in \Z$ of $\M$ such that $\assign(\M,z)$ is equal to $\theta$ on all variables in $X$. 
\fi

Let $\varphi\dfn\Psi \land \Boxstar \Phi \in \LTLstar_\CNF$.
We denote by $\propF(\Phi)$ the propositional CNF formula obtained from $\Phi$ after replacing each occurrence of $\Boxstar x$ in $\Phi$ with a fresh propositional variable (with the same name). 
For a set of variables $V$ and a set of assignments $\Alpha$ of the variables in $V$, we denote by $\glassign(\Alpha,V) \colon \SB \Boxstar v \SM v \in V \SE \to \{0,1\}$ the assignment defined by setting $\glassign(\Alpha,V)(\Boxstar v)=1$ if and only if $\alpha(v)=1$ for every $\alpha \in \Alpha$. Moreover, if $\theta \colon V \to \{0,1\}$ is an assignment of the variables in $V$, we denote by $\glassign(\Alpha,V,\theta)$ the assignment defined by setting $\glassign(\Alpha,V,\theta)(v)=\theta(v)$ and $\glassign(\Alpha,V,\theta)(\Boxstar v)=\glassign(\Alpha,V)(\Boxstar v)$ for every $v \in V$.
For a set $\Alpha$ of assignments over $V$ and an assignment $\theta \colon V' \to \{0,1\}$ with $V' \subseteq V$, we denote by $\Alpha(\theta)$ the set of all assignments $\alpha \in \Alpha$ such that $\alpha(v)=\theta(v)$ for every $v \in V'$, i.e., for all assignments from $\Alpha$ the values corresponding to $V'$ are overwritten with the one of $\theta'$.

For a set $\Alpha$ of assignments over some variables $V$ and a subset $V' \subseteq V$, we denote by $\Alpha|_{V'}$ the \emph{projection} of $\Alpha$ onto $V'$, i.e., the set of assignments $\alpha \in \Alpha$ restricted to the variables in $V'$.

Intuitively the next lemma describes the translation of a temporal model into separate satisfiability checks for propositional formulas.

\begin{lemma}\label{lem:boxstar-assign}
  Let $\varphi\dfn \Psi \land \Boxstar \Phi \in\LTLstar$. Then, $\varphi$ is satisfiable if and only if there is a set $\Alpha$ of assignments of the variables in $\varphi$ and an assignment $\alpha_0 \in \Alpha$ such that:
  $\alpha_0$ satisfies $\Psi$ and for every assignment $\alpha \in \Alpha$ it holds that $\glassign(\Alpha,\Vars\varphi,\alpha)$ satisfies the propositional formula $\propF(\Phi)$.
\end{lemma}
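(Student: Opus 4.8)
The plan is to exploit the defining feature of the $\Boxstar$ operator: whether $\M,k \models \Boxstar x$ holds does \emph{not} depend on the world $k$, but only on the global question of whether $x$ is true at every world of $\M$. Consequently a formula $\Psi \land \Boxstar\Phi$ decomposes into a purely local requirement (at each world the propositional snapshot must satisfy $\propF(\Phi)$, with the fresh variables $\Boxstar v$ fixed to their global truth values) and a purely global requirement (the value of $\Boxstar v$ is determined by whether $v$ holds in every snapshot). I would prove the equivalence by translating back and forth between a temporal model $\M$ and a pair $(\Alpha,\alpha_0)$, where $\Alpha=\assign(\M)$ collects all snapshots occurring in $\M$ and $\alpha_0=\assign(\M,0)$ is the snapshot at world $0$.

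For the forward direction, assume $\M,0 \models \varphi$. I would set $\Alpha \dfn \assign(\M)$ and $\alpha_0 \dfn \assign(\M,0)$; then $\alpha_0 \in \Alpha$, and since $\Psi$ is a conjunction of propositional variables, $\M,0\models\Psi$ yields that $\alpha_0$ satisfies $\Psi$. The core step is the observation that for every world $k$ and every variable $x$, $\M,k \models \Boxstar x$ holds if and only if $x$ is true at every world, i.e.\ if and only if $\alpha(x)=1$ for all $\alpha\in\Alpha$, which is exactly the definition of $\glassign(\Alpha,\Vars\varphi)(\Boxstar x)$. Hence evaluating $\Phi$ at world $k$ under $\M$ coincides with evaluating $\propF(\Phi)$ under $\glassign(\Alpha,\Vars\varphi,\assign(\M,k))$. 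Since $\M,0 \models \Boxstar\Phi$ means $\M,k\models\Phi$ for all $k\in\Z$, and since every $\alpha\in\Alpha$ equals $\assign(\M,k)$ for some $k$, we obtain that $\glassign(\Alpha,\Vars\varphi,\alpha)$ satisfies $\propF(\Phi)$ for every $\alpha\in\Alpha$.

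For the backward direction, suppose $\Alpha$ and $\alpha_0\in\Alpha$ are given with the stated properties. Because $\Vars\varphi$ is finite, $\Alpha$ is a finite nonempty set of assignments. I would build a model $\M$ by distributing the elements of $\Alpha$ over $\Z$ so that (i) world $0$ carries $\alpha_0$, and (ii) every $\alpha\in\Alpha$ is realised at some world while no assignment outside $\Alpha$ occurs, so that $\assign(\M)=\Alpha$; concretely one can enumerate $\Alpha$ and place its members cyclically, reserving world $0$ for $\alpha_0$. Formally $V(x)\dfn\SB k\in\Z \SM \text{the snapshot placed at }k\text{ maps }x\text{ to }1\SE$. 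Then $\M,0\models\Psi$ because $\alpha_0$ satisfies $\Psi$, and using $\assign(\M)=\Alpha$ the global value $\glassign(\Alpha,\Vars\varphi)(\Boxstar x)$ again coincides with $\M,k\models\Boxstar x$; hence $\M,k\models\Phi$ for every $k$ by the assumed satisfaction of $\propF(\Phi)$, giving $\M,0\models\Boxstar\Phi$ and therefore $\M,0\models\varphi$.

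The step I expect to require the most care is condition (ii) in the backward direction: it is not enough that every assignment in $\Alpha$ appears in $\M$, we must also ensure that \emph{no} other snapshot occurs, for otherwise the model's genuine $\Boxstar$-semantics could differ from the combinatorial value $\glassign(\Alpha,\Vars\varphi)$ and the clauses of $\Phi$ might no longer be satisfied everywhere. Making the set of occurring snapshots exactly $\Alpha$ (and not merely a superset) is what aligns the two notions; the forward direction, by contrast, is a direct unfolding of the $\Boxstar$-semantics and needs no such bookkeeping.
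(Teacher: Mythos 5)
Your proof is correct and follows essentially the same route as the paper: the forward direction takes $\Alpha=\assign(\M)$ and $\alpha_0=\assign(\M,0)$, and the backward direction realises exactly the assignments of $\Alpha$ over $\Z$ (the paper uses constant infinite tails repeating $\alpha_0$ and the last assignment, while you place the members cyclically --- both guarantee $\assign(\M)=\Alpha$). The subtlety you flag, that no snapshot outside $\Alpha$ may occur lest the true $\Boxstar$-semantics diverge from $\glassign(\Alpha,\Vars\varphi)$, is precisely what the paper's construction is designed to ensure, though it leaves the verification implicit.
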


\iflong 
\begin{proof}
  Towards showing the forward direction assume that $\varphi\dfn\Psi\land \Boxstar \Phi$ is satisfiable and let $\M$ be a temporal interpretation witnessing this. We claim that the set of assignments $\Alpha\dfn\assign(\M)$ together with the assignment $\alpha_0\dfn\assign(\M,0)$ satisfy the conditions of the lemma.

  Towards showing the reverse direction assume that $\Alpha\dfn\{\alpha_0,\dotsc,\alpha_{|\Alpha|}\}$ is as given in the statement of the lemma.
  We claim that the temporal interpretation $\M$ defined below satisfies the formula $\varphi$. Let $\Z_{<0}$ be the set of all integers smaller than $0$ and let $\Z_{>|\Alpha|}$ be the set of all integers greater than $|\Alpha|$. Then for every variable $v \in \Vars\varphi$, the set $\M(v)$ contains the set $\SB z \SM\alpha_z(v)=1 \land 0 \leq z \leq |\Alpha|\SE$. Moreover, if $\alpha_0(v)=1$, $\M(v)$ also contains the set $\Z_{<0}$ and if $\alpha_{|\Alpha|}(v)=1$, $\M(v)$ additionally contains the set $\Z_{>|\Alpha|}$. It is easy to verify that $\M,0 \models \varphi$. \qed
\end{proof}
\fi

Informally, the following lemma shows that for deciding the satisfiability of an $\LTLstar$ formula,
we only need to consider sets of assignments $\Alpha$, whose size is linear (instead of exponential) in the number of variables.
\begin{lemma}\label{lem:boxstarls}
  Let $\varphi \dfn\Psi \land \Boxstar \Phi \in \LTLstar$ and $X \subseteq \Vars\varphi$.
  Then $\varphi$ is satisfiable if and only if there is a set $\Theta$ of assignments of the variables in $X$, an assignment $\theta_0 \in \Theta$, a set $\Alpha$ of assignments of the variables in $\Vars\varphi$, and an assignment $\alpha_0 \in \Alpha$ such that:
  \begin{itemize}
  \item[(C1)] the set $\Theta$ is equal to $\Alpha_{|X}$,
  \item[(C2)] the assignment $\theta_0$ is equal to $\alpha_0|_X$,
  \item[(C3)] $\Alpha$ and $\alpha_0$ satisfy the conditions stated in Lemma~\ref{lem:boxstar-assign}, and
  \item[(C4)] $|\Alpha(\theta)| \leq |\Vars\varphi\setminus X|+1$ for every $\theta \in \Theta$.
  \end{itemize}
\end{lemma}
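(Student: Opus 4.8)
The plan is to prove both directions of the biconditional, leaning on Lemma~\ref{lem:boxstar-assign} as the bridge. The reverse direction is immediate: if $\Theta,\theta_0,\Alpha,\alpha_0$ satisfy (C1)--(C4), then by (C3) the pair $\Alpha,\alpha_0$ already witnesses satisfiability of $\varphi$ via Lemma~\ref{lem:boxstar-assign}. So the entire content lies in the forward direction, where I must manufacture a \emph{small} set $\Alpha$ (one obeying the size bound (C4)) from an arbitrary witness of satisfiability.

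\medskip\noindent\emph{Forward direction.}
First I would invoke Lemma~\ref{lem:boxstar-assign} to obtain some set $\Alpha'$ of assignments over $\Vars\varphi$ and some $\alpha_0' \in \Alpha'$ such that $\alpha_0'$ satisfies $\Psi$ and $\glassign(\Alpha',\Vars\varphi,\alpha)$ satisfies $\propF(\Phi)$ for every $\alpha \in \Alpha'$. The key observation is that the assignment $\glassign(\Alpha',\Vars\varphi,\cdot)$ feeds into $\propF(\Phi)$ only through (i) the ``global'' values $\glassign(\Alpha',\Vars\varphi)(\Boxstar v)$, which depend on $\Alpha'$ \emph{as a whole}, and (ii) the local values $\alpha(v)$ of the individual assignment. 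Crucially, the global value on $\Boxstar v$ is monotone: removing assignments from $\Alpha'$ can only change a $\Boxstar v$-value from $0$ to $1$, never the reverse. So I would set $\Theta \dfn \Alpha'_{|X}$ and $\theta_0 \dfn \alpha_0'|_X$, which forces (C1) and (C2), and then \emph{thin out} $\Alpha'$ fiber by fiber over $\Theta$.

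\medskip\noindent\emph{The thinning argument.}
For each $\theta \in \Theta$, consider the fiber $\Alpha'(\theta)$ of all assignments in $\Alpha'$ agreeing with $\theta$ on $X$. Within this fiber the values on $X$ are fixed, so two assignments differ only on $\Vars\varphi \setminus X$. I would keep a minimal subset $\Alpha(\theta) \subseteq \Alpha'(\theta)$ that preserves, for every variable $v \in \Vars\varphi$, whether \emph{some} assignment in the fiber sets $v$ to $0$; this requires at most one ``witness'' per variable of $\Vars\varphi \setminus X$ (the $X$-variables are constant on the fiber and contribute no freedom), plus possibly $\alpha_0'$ itself to preserve $\theta_0$, giving the bound $|\Alpha(\theta)| \leq |\Vars\varphi \setminus X| + 1$ demanded by (C4). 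Taking $\Alpha \dfn \bigcup_{\theta \in \Theta} \Alpha(\theta)$ and $\alpha_0 \dfn \alpha_0'$ (retained in its own fiber), I then check that this reduced $\Alpha$ still satisfies the conditions of Lemma~\ref{lem:boxstar-assign}: the local satisfaction of each $\alpha \in \Alpha$ is inherited from $\Alpha'$, so the only worry is that the \emph{global} assignment $\glassign(\Alpha,\Vars\varphi)$ could differ from $\glassign(\Alpha',\Vars\varphi)$. Since $\Alpha \subseteq \Alpha'$ and my thinning preserved every variable-value that was ever $0$, the set of $\Boxstar v$ set to $1$ is unchanged; hence $\glassign(\Alpha,\Vars\varphi) = \glassign(\Alpha',\Vars\varphi)$ on all variables, and $\propF(\Phi)$ remains satisfied under $\glassign(\Alpha,\Vars\varphi,\alpha)$ for each surviving $\alpha$.

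\medskip
The main obstacle I anticipate is exactly this invariance of the global part: I must argue carefully that discarding assignments never creates a \emph{new} $\Boxstar v = 1$ (which could flip a previously satisfied clause of $\propF(\Phi)$ to unsatisfied through a negated $\Boxstar v$ literal). The fix is built into the thinning rule --- retain at least one witness setting $v=0$ whenever such a witness exists in the original fiber --- so the set $\{v : \forall \alpha \in \Alpha,\ \alpha(v)=1\}$ does not grow. A secondary subtlety is bookkeeping: ensuring $\Theta = \Alpha_{|X}$ still holds after thinning (each fiber must remain nonempty, which it does since we keep at least one witness per nonempty fiber) and that $\alpha_0$ survives so that (C2) holds. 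Once these invariants are verified, the conditions (C1)--(C4) follow directly.
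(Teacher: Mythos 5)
Your proof is correct and takes essentially the same route as the paper's: the reverse direction is dismissed via Lemma~\ref{lem:boxstar-assign}, and the forward direction thins the witness set obtained from Lemma~\ref{lem:boxstar-assign} while preserving, for every variable, the existence of an assignment setting it to $0$ (so the global $\Boxstar$-values and hence satisfaction of $\propF(\Phi)$ are unchanged), keeping every fiber nonempty and retaining $\alpha_0$. The paper encodes the same invariant as its condition~(3) (stated for both truth values and enforced by adding at most $|\Vars\varphi\setminus X|$ assignments globally rather than per fiber), but the argument is the same.
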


\iflong
\begin{proof}
  Note that the reverse direction follows immediately from Lemma~\ref{lem:boxstar-assign}, because the existence of the set of assignments $\Alpha$ and the assignment $\alpha_0$ satisfying condition (C3) imply the satisfiability of $\varphi$.

  Towards showing the forward direction assume that $\varphi$ is satisfiable. Because of Lemma~\ref{lem:boxstar-assign} there is a set $\Alpha$ of assignments of the variables in $\varphi$ and an assignment $\alpha_0 \in \Alpha$ that satisfy the conditions of Lemma~\ref{lem:boxstar-assign}. Let $\Theta$ be equal to $\Alpha_{|X}$ and $\theta_0$ be equal to $\alpha_0|_X$. Observe that setting $\Theta$ and $\theta_0$ in this way already satisfies (C1) to (C3). We will show that there is a subset of $\Alpha$ that still satisfies (C1)--(C3) and additionally (C4).
  Towards showing this consider any subset $\Alpha'$ of $\Alpha$ that satisfies the following three conditions: (1) $\alpha_0 \in \Alpha'$, (2) for every $\theta \in \Theta$ it holds that $\Alpha'(\theta)\neq \emptyset$, and (3) for every variable $v$ of $\varphi$ and every $b \in \{0,1\}$ it holds that there is an assignment $\alpha \in \Alpha$ with $\alpha(v)=i$ if and only if there is an assignment $\alpha' \in \Alpha'$ with $\alpha'(v)=i$.
  Note that conditions (1) and (2) ensure that $\Alpha'$ satisfies (C1) and (C2) and condition (3) ensures (C3).
  Hence, any subset $\Alpha'$ satisfying conditions (1)--(3) still satisfies (C1)--(C3). It remains to show how to obtain such a subset $\Alpha'$ that additionally satisfies (C4).
  We define $\Alpha'$ as follows. Let $\Alpha_0'$ be a subset of $\Alpha$ containing $\alpha_0$ as well as one arbitrary assignment $\alpha \in \Alpha(\theta)$ for every $\theta \in \Theta$. 
  Note that $\Alpha_0'$ already satisfies conditions (1) and (2) as well as condition (3) for every variable $v \in X$. Observe furthermore that if there is a variable $v$ of $\varphi$ such that condition (3) is violated by $\Alpha_0'$ then it is sufficient to add at most one additional assignment to $\Alpha_0'$ in order to satisfy condition (3) for $v$. Let $\Alpha'$ be obtained from $\Alpha_0'$ by adding (at most $|\Vars\varphi\setminus X|$) assignments in order to ensure condition (3) for every variable $v \in \Vars\varphi\setminus X$. Then $\Alpha'$ satisfies the conditions of the lemma. \qed
\end{proof}
\fi

We are now ready to show tractability for the evaluation of strong $\HORN$\hy backdoor sets.
\begin{theorem}\label{thm:bd-eval-horn}
  $\BDeval{\supBoxstar}{\HORN}$ is in $\FPT$.
\end{theorem}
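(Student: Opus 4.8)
The plan is to combine the structural reduction of Lemma~\ref{lem:boxstarls} with the polynomial-time solvability of propositional $\HORN$ satisfiability. By Lemma~\ref{lem:boxstarls}, deciding satisfiability of $\varphi\dfn\Psi\land\Boxstar\Phi$ amounts to searching for a set $\Theta$ of assignments of the backdoor variables $X$, a distinguished $\theta_0\in\Theta$, and an accompanying set $\Alpha$ of full assignments with $\Alpha_{|X}=\Theta$ and $|\Alpha(\theta)|\le|\Vars\varphi\setminus X|+1$ for every $\theta\in\Theta$. Since $|X|$ is the parameter, there are at most $2^{2^{|X|}}$ candidate sets $\Theta$ and at most $2^{|X|}$ choices of $\theta_0\in\Theta$, so the algorithm iterates over all such pairs and, for each, tests in polynomial time whether a suitable $\Alpha$ exists.

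Fix a candidate $(\Theta,\theta_0)$. This determines the global values of the temporal literals over $X$: writing $\beta(\Boxstar x)=1$ exactly when $\theta(x)=1$ for all $\theta\in\Theta$, the combined pair $(\theta,\beta)$ is a \emph{consistent} assignment for every $\theta\in\Theta$ (indeed $\beta(\Boxstar x)=1$ forces $\theta(x)=1$). Hence, by the strong-backdoor property of $X$, each reduct $H_\theta\dfn\propF(\Phi)[\theta,\beta]$ is a propositional $\HORN$ formula, over the variables $\{\,v,\Boxstar v\mid v\in\Vars\varphi\setminus X\,\}$.

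The key step is to encode the existence of a valid $\Alpha$ as a single propositional $\HORN$-satisfiability instance $\mathcal H$. Put $N\dfn|\Vars\varphi\setminus X|$ and introduce, for every $\theta\in\Theta$ and every $i\in\{0,\dots,N\}$, a private copy $v^{\theta,i}$ of each local variable $v\in\Vars\varphi\setminus X$; these copies represent the at most $N+1$ assignments of $\Alpha(\theta)$, while the global variables $\Boxstar v$ are shared across all copies. I take $\mathcal H$ to be the conjunction of: a copy of $H_\theta$ on the variables $v^{\theta,i}$ for each $(\theta,i)$; unit clauses fixing $\Boxstar x$ to $\beta$ for $x\in X$; the clauses $\lnot\Boxstar v\lor v^{\theta,i}$ for every local $v$ and every $(\theta,i)$; the clauses $\big(\bigvee_{\theta,i}\lnot v^{\theta,i}\big)\lor\Boxstar v$ for every local $v$; and unit clauses on the designated copy $(\theta_0,0)$ enforcing $\Psi$ (the branch is discarded if $\Psi$ is inconsistent with $\theta_0$ on $X$). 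Every conjunct has at most one positive literal, so $\mathcal H$ is $\HORN$ and is solved in polynomial time. The last two groups of clauses encode precisely the defining equivalence of $\glassign(\Alpha,\Vars\varphi)$: the forward clauses force all copies of $v$ to be $1$ whenever $\Boxstar v=1$, and the single-positive-literal reverse clause forces $\Boxstar v=1$ once all copies of $v$ are $1$. A satisfying assignment of $\mathcal H$ thus yields a global assignment $g$ together with sets $A_\theta$ meeting conditions (C1)--(C4) and the hypotheses of Lemma~\ref{lem:boxstar-assign}, whence $\varphi$ is satisfiable; conversely, a model of $\varphi$ gives $\Theta,\theta_0,\Alpha$ via Lemma~\ref{lem:boxstarls}, and padding each $\Alpha(\theta)$ to exactly $N+1$ assignments by repetition, with $g$ read off as $\glassign(\Alpha,\Vars\varphi)$, satisfies $\mathcal H$ for this $(\Theta,\theta_0)$. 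Some branch succeeds iff $\varphi$ is satisfiable, and the total time is $2^{2^{|X|}}\cdot 2^{|X|}\cdot|\varphi|^{O(1)}$, which is $\FPT$.

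The main obstacle I expect is the faithful propositional encoding of the global-consistency equivalence built into $\glassign$, and in particular justifying that a bounded number ($N+1$) of local copies per $\theta$ suffices to witness simultaneously all variables $v$ with $\Boxstar v=0$; this is exactly what bound (C4) of Lemma~\ref{lem:boxstarls} supplies. The structural reason the whole scheme stays polynomial is that the reverse-consistency constraint ``all copies of $v$ equal $1$ implies $\Boxstar v=1$'' is itself a $\HORN$ clause, its only positive literal being $\Boxstar v$, so adjoining it to the $\HORN$ reducts $H_\theta$ keeps $\mathcal H$ within $\HORN$.
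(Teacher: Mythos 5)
Your proposal is correct and takes essentially the same approach as the paper's proof: you enumerate the same pairs $(\Theta,\theta_0)$, construct per branch the same propositional $\HORN$ instance consisting of $|\Vars{\varphi}\setminus X|+1$ renamed copies of the $\HORN$ reducts for each $\theta\in\Theta$ sharing the global variables $\Boxstar v$, enforce $\Psi$ on a designated copy for $\theta_0$, and use exactly the paper's consistency clauses $\Fcons$ (whose reverse clause is $\HORN$ because its sole positive literal is $\Boxstar v$), with the same correctness argument via Lemmas~\ref{lem:boxstar-assign} and~\ref{lem:boxstarls} and the same $2^{2^{|X|}}\cdot 2^{|X|}\cdot|\varphi|^{O(1)}$ time bound. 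The only cosmetic difference is that you add redundant unit clauses fixing $\Boxstar x$ for $x\in X$, which the paper instead substitutes away via $\glassign(\Theta,X,\theta)$.
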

\begin{proof}
  Let $\varphi\dfn\Psi \land \Boxstar \Phi \in\LTLstar$ and let $X \subseteq \Vars\varphi$ be a strong $\HORN$\hy backdoor of $\varphi$.  The main idea of the algorithm is as follows:
  For every set $\Theta$ of assignments of the variables in $X$ and every $\theta_0 \in \Theta$, we will construct a propositional $\HORN$\hy formula $F_{\Theta,\theta_0}$, which is satisfiable if and only if there is a set $\Alpha$ of assignments of the variables in $\Vars\varphi$ and an assignment $\alpha_0 \in \Alpha$ satisfying the conditions of Lemma~\ref{lem:boxstarls}. 
  It then follows from Lemma~\ref{lem:boxstarls} that $\varphi$ is satisfiable if and only if there is such a set $\Theta$ of assignments and an assignment $\theta_0 \in \Theta$ for which $F_{\Theta,\theta_0}$ is satisfiable.
  Because there are at most $2^{2^{|X|}}$ such sets $\Theta$ and at most $2^{|X|}$ such assignments $\theta_0$ and for each of these sets the formula $F_{\Theta,\theta_0}$ is a $\HORN$\hy formula, it follows that checking whether there are $\Theta$ and $\theta_0$ such that the formula $F_{\Theta,\theta_0}$ is satisfied (and therefore decide the satisfiability of $\varphi$) can be done in time $O(2^{2^{|X|}}\cdot 2^{X}\cdot |F_{\Theta,\theta_0}|)$. Since we will show below that the length of the formula $F_{\Theta,\theta_0}$ can be bounded by an (exponential) function of $|X|$ times a polynomial in the input size, i.e., the formula $\varphi$, this implies that $\BDeval{\supBoxstar}{\HORN}$ is in $\FPT$. 

  The remainder of the proof is devoted to the construction of the formula $F_{\Theta,\theta_0}$ for a fixed set of assignments $\Theta$ and a fixed assignment $\theta_0 \in \Theta$ (and to show that it enforces the conditions of Lemma~\ref{lem:boxstarls}).

  Let $R\dfn\Vars\varphi\setminus X$ and $r\dfn|R|+1$. For a propositional formula $F$, a subset $V \subseteq \Vars{F}$, an integer $i$ and a label $s$, we denote by $\mcopy(F,V,i,s)$ the propositional formula obtained from $F$ after replacing each occurrence of a variable $v \in V$ with a novel variable $v^i_s$.
  We need the following auxiliary formulas.
  For every $\theta \in \Theta \setminus \theta_0$, let $F_{\Theta,\theta_0}^\theta$ be the formula:
  $$
  \bigwedge_{1\leq i \leq r}\mcopy(\propF(\Phi[\glassign(\Theta,X,\theta)]),R,i,\theta).
  $$
  Moreover, let $F_{\Theta,\theta_0}^{\theta_0}$ be the formula:
  \begin{align*}
    \mcopy(\Psi[\theta_0] \land\propF(\Phi[\glassign(\Theta,X,\theta_0)]),R,1,\theta_0) \land &\\
    \bigwedge_{2\leq i \leq r}\mcopy(\propF(\Phi[\glassign(\Theta,X,\theta_0)]),R,i,\theta_0).&
  \end{align*}

  Observe that because $X$ is a strong $\HORN$\hy backdoor set (and the formula $\Psi$ only consists of unit clauses), it holds that the formula $F_{\Theta,\theta_0}^\theta$ is $\HORN$ for every $\theta \in \Theta$.

  We also need the propositional formula $\Fcons$ that enforces the consistency between the propositional variables $\Boxstar x$ and the variables in $\SB x_\theta^i \SM \theta \in \Theta \land 1\leq i \leq r \SE$ for every $x \in \Vars\varphi \setminus X$. The formula $\Fcons$ consists of the following clauses: for every $\theta \in \Theta$, $i$ with $1 \leq i \leq r$, and $v \in R$, the clause $\Boxstar v \rightarrow v_\theta^i=\lnot \Boxstar v \lor v_\theta^i$ and for every $v \in R$ the clause
  $$\lnot \Boxstar v \rightarrow \bigvee_{\theta \in \Theta
    \land 1 \leq i \leq r}\lnot v_\theta^i=\Boxstar v \lor \bigvee_{\theta \in \Theta
    \land 1 \leq i \leq r}\lnot v_\theta^i.$$
Observe that $\Fcons$ is a $\HORN$ formula.

  Finally the formula $F_{\Theta,\theta_0}$ is defined as:
  $\bigwedge_{\theta \in \Theta}F_{\Theta,\theta_0}^\theta \land \Fcons.$
  
  \noindent Note that $F_{\Theta,\theta_0}$ is $\HORN$ and the length of
  $F_{\Theta,\theta_0}$ is at most 
  \begin{align*}
    &|F_{\Theta,\theta_0}| \leq  \sum_{\theta \in \Theta}|F_{\Theta,\theta_0}^{\theta}|+|\Fcons|\\
    & \leq |2^{|X|}(|\Vars\varphi \setminus X|+1)(|\Phi|+|\Psi|)+2\cdot 2^{|X|}\cdot(|\Vars\varphi\setminus X|+1)^2
  \end{align*}
  and consequently bounded by a function of $|X|$ times a polynomial in
  the input size. It is now relatively straightforward to verify that
  $F_{\Theta,\theta}$ is satisfiable if and only if there is a set
  $\Alpha$ of assignments of the variables in $\Vars\varphi$ and an
  assignment $\alpha_0 \in \Alpha$ satisfying the conditions of
  Lemma~\ref{lem:boxstarls}. Informally, 
  for every $\theta \in \Theta$,
  each of the $r$ copies of the formula
  $\propF(\Phi[\glassign(\Theta,X,\theta)])$ represents one of the at
  most $r$ assignments in $\Alpha(\theta)$,
  the formula $F_{\Theta,\theta_0}^{\theta_0}$ ensures (among other
  things) that the assignment choosen for $\alpha_0$ satisfies $\Psi$
  and the formula $\Fcons$ ensures that the ``global assignments''
  represented by the
  propositional variables $\Boxstar x$ is consistent with the set of local
  assignments in $\Alpha$ represented by the variables in
  $\SB x_\theta^i \SM \theta \in \Theta \land 1\leq i \leq r \SE$ for every $x \in \Vars\varphi \setminus X$. \qed
\end{proof}

Surprisingly the next result will show that $\KROM$ formulas turn out to be quite challenging. Backdoor set evaluation of this class of formulas is shown to be $\paraNP$-complete which witnesses an intractability degree in the parameterised sense.

\begin{theorem}\label{thm:eval-boxstar-krom}
  $\BDeval{\supBoxstar}{\KROM}$ is $\paraNP$-complete.
\end{theorem}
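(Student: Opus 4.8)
The plan is to prove membership in $\paraNP$ and $\paraNP$-hardness separately; the latter is where the real work lies.

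For membership I would ignore the supplied backdoor entirely and show that deciding satisfiability of an arbitrary $\LTLstar_{\CNF}$ formula already lies in $\NP$. The crucial observation is that Lemma~\ref{lem:boxstarls}, applied with $X\dfn\emptyset$, collapses to a polynomial-size model property: then $\Theta$ is the single empty assignment, condition (C4) reads $|\Alpha|\leq|\Vars{\varphi}|+1$, and (C1)--(C3) merely reproduce the conditions of Lemma~\ref{lem:boxstar-assign}. Hence $\varphi\dfn\Psi\land\Boxstar\Phi$ is satisfiable if and only if there is a set $\Alpha$ of at most $|\Vars{\varphi}|+1$ assignments of $\Vars{\varphi}$ together with a distinguished $\alpha_0\in\Alpha$ such that $\alpha_0$ satisfies $\Psi$ and, for every $\alpha\in\Alpha$, the assignment $\glassign(\Alpha,\Vars{\varphi},\alpha)$ satisfies $\propF(\Phi)$. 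A nondeterministic machine guesses $\Alpha$ and $\alpha_0$ using $O(|\Vars{\varphi}|^2)$ bits, computes the global values $\glassign(\Alpha,\Vars{\varphi})(\Boxstar v)=\bigwedge_{\alpha\in\Alpha}\alpha(v)$ in polynomial time, and verifies all polynomially many satisfaction checks in polynomial time. Thus satisfiability is in $\NP$, and since $\paraNP$ contains $\NP$, the problem $\BDeval{\supBoxstar}{\KROM}$ is in $\paraNP$.

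For $\paraNP$-hardness I would show that the problem is already $\NP$-hard for one fixed backdoor size $\ell$, i.e. reduce an $\NP$-complete problem such as \textsc{3-Sat} to instances $(\phi,X)$ in which $X$ is a strong $\KROM$-backdoor of constant size $\ell$. Note first that $\ell=0$ cannot work, since an empty backdoor forces $\phi$ itself to be a $\KROM$ formula, and $\LTLstar_{\KROM}$ satisfiability is in $\P$ by Artale et~al.; so the reduction must spend a small but nonzero budget of backdoor variables. The intended mechanism is to carry the candidate truth assignment in the global variables $\Boxstar q_i$: using only $\KROM$ clauses one can pin a variable to be world-independent, namely by adding $(q_i\lor q_i')$, $(\lnot q_i\lor\lnot q_i')$ and $(\Boxstar q_i\lor\Boxstar q_i')$, which forces $q_i'$ to equal $\lnot q_i$ at every world and $q_i$ to be globally constant, so that $\Boxstar q_i$ faithfully represents the Boolean value of the $i$-th variable. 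On top of this, the set of temporal worlds supplies the combinatorial search space, while the constant pool of backdoor variables provides a fixed number of world-local ``modes'' that, thanks to the $\Boxstar$-operator, may be reused across all worlds.

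The main obstacle is precisely the tension between the $\KROM$ restriction and the requirement that the backdoor be of \emph{constant} size. A three-literal clause of \textsc{3-Sat} cannot survive in a $\KROM$ reduct unless each of its long clauses carries a backdoor literal that shrinks it; the textbook Tseitin split achieves this but spends one fresh backdoor variable per clause, which would make $|X|$ grow with the number of clauses and is therefore forbidden. Consequently the reduction cannot be local and clause-by-clause: the disjunctive choices must be offloaded onto the multiplicity of worlds and onto the global $\Boxstar$-literals (noting that $\lnot\Boxstar x$ already expresses ``$x$ is false at some world''), so that a \emph{fixed} number of backdoor variables suffices for arbitrarily many clauses. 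Designing such a fixed-backdoor encoding, proving it equisatisfiable with the source instance, and --- as explicitly demanded by the definition of $\paraNP$-hardness --- verifying that the emitted set $X$ is genuinely a strong $\KROM$-backdoor of the formula of the claimed constant size, is the technical heart of the argument.
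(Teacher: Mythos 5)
Your membership argument is sound and in fact more self-contained than the paper's: the paper simply cites the $\NP$ upper bound for $\LTL^{\supBoxstar}_{\CNF}$ satisfiability from Artale et~al., whereas you rederive it from Lemma~\ref{lem:boxstarls} with $X=\emptyset$, which does yield the linear-size-model property $|\Alpha|\leq|\Vars{\varphi}|+1$ and hence a guess-and-check $\NP$ algorithm. That half is correct.

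The hardness half, however, is a plan rather than a proof, and the piece you explicitly defer --- ``designing such a fixed-backdoor encoding'' --- is exactly the content of the theorem. You correctly identify the central tension (long clauses must be shrunk by backdoor literals, yet the backdoor must have constant size) and the right resources to exploit (the multiplicity of worlds, and literals $\lnot\Boxstar x$ meaning ``$x$ fails at some world''), but you never exhibit a reduction, never specify a formula, and never verify equisatisfiability or the strong-backdoor property. Moreover, your pinning gadget $(q_i\lor q_i')$, $(\lnot q_i\lor\lnot q_i')$, $(\Boxstar q_i\lor\Boxstar q_i')$ only makes variables world-independent; after that, the clauses of a \textsc{3-Sat} instance would reappear as three-literal clauses over global literals, i.e.\ the very obstacle you identified is still standing. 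For comparison, the paper resolves it by reducing from $\threeCOL$ rather than \textsc{3-Sat}, with a backdoor $B=\{b_1,b_2\}$ of size two: a unit clause $\lnot\Boxstar v_i$ forces each vertex variable $v_i$ to be false at some world; the colour of $v_i$ is read off as the pair of truth values of $(b_1,b_2)$ at such a world (the clause $\lnot b_1\lor\lnot b_2$ excludes the fourth pair); and for every edge $\{v_i,v_j\}$ and every colour there is a pair of clauses $v_i\lor\Boxstar e\lor\dotsb$ and $v_j\lor\lnot\Boxstar e\lor\dotsb$, whose remaining literals are over $b_1,b_2$ and whose shared global literal $\Boxstar e$ produces a contradiction exactly when $v_i$ and $v_j$ receive the same colour. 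Every clause in that construction has at most two non-backdoor literals, so $B$ is a strong $\KROM$-backdoor by inspection. Until you supply a construction of this kind and prove both directions of the equivalence, your argument establishes only membership, not $\paraNP$-hardness.
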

\begin{proof}
  The membership in $\paraNP$ follows because the satisfiability of $\LTL_\CNF^{\supBoxstar}$ can be decided in $\NP$~\cite[Table 1]{ArtaleKRZ13}.

  We show $\paraNP$-hardness of $\BDeval{\supBoxstar}{\KROM}$ by giving a polynomial time reduction from the $\NP$-hard problem $\threeCOL$ to $\BDeval{\supBoxstar}{\KROM}$ for backdoors of size two. In $\threeCOL$ one asks whether a given input graph $G=(V,E)$ has a colouring $f \colon V(G)\rightarrow \{1,2,3\}$ of its vertices with at most three colours such that $f(v)\neq f(u)$ for every edge $\{u,v\}$ of $G$. Given such a graph $G=(V,E)$, we will construct an $\LTLstar_\CNF$ formula $\phi\dfn\Psi \land \Boxstar\Phi$, which has a strong $(\KROM, \Boxstar)$-backdoor $B$ of size two, such that the graph $G$ has a $3$-colouring if and only if $\phi$ is satisfiable.

  For the remainder we will assume that there exists an arbitrary but fixed ordering of the vertices $V(G)=\{v_1,\dotsc,v_n\}$. Further for the construction we assume w.l.o.g.~that any undirected edge $e=\{v_i,v_j\}\in E$ follows this ordering, i.e., $i<j$. The formula $\phi$ contains the following variables:
  \begin{itemize}
  \item[(V1)] The variables $b_1$ and $b_2$. These variables make up the backdoor set $B$, i.e., $B\dfn\{b_1,b_2\}$.
  \item[(V2)] For every $i$ with $1 \leq i \leq n$, the variable $v_i$.
  \item[(V3)] For every $e=\{v_i,v_j\} \in E(G)$ with $1 \leq i,j \leq n$ the variables  $e_{v_iv_j}^{b_1b_2}$, $e_{v_iv_j}^{\bar{b}_1b_2}$, and $e_{v_iv_j}^{b_1\bar{b}_2}$.
  \end{itemize}
  
  We set $\Psi$ to be the empty formula and the formula $\Phi$ contains the following clauses:
  \begin{itemize}
  \item[(C1)] For every $i$ with $1 \leq i \leq n$, the clause $\lnot \Boxstar v_i$. Informally, this clause ensures that $v_i$ has to be false at least at one world, which will later be used to assign a color to the vertex $v_i$ of $G$. Observe that the clause is $\KROM$.
  \item[(C2)] For every $e=\{v_i,v_j\} \in E(G)$ with $1 \leq i,j \leq n$, the clauses $v_i \lor \Boxstar e_{v_iv_j}^{b_1b_2} \lor b_1 \lor b_2$, $v_i\lor \Boxstar e_{v_iv_j}^{\bar{b}_1b_2} \lor \lnot b_1 \lor b_2$, and $v_i \lor \Boxstar e_{v_iv_j}^{b_1\bar{b}_2} \lor b_1 \lor \lnot b_2$ as well as the clauses $v_j \lor \lnot \Boxstar e_{v_iv_j}^{b_1b_2} \lor b_1 \lor b_2$, $v_j \lor \lnot \Boxstar e_{v_iv_j}^{\bar{b}_1b_2} \lor \lnot b_1 \lor b_2$, and $v_j \lor \lnot \Boxstar e_{v_iv_j}^{b_1\bar{b}_2} \lor b_1 \lor \lnot    b_2$.
    Observe that all of these clauses are $\KROM$ after deleting the variables in $B$. 
  \item[(C3)] The clause $\lnot b_1 \lor \lnot b_2$. Informally, this clause excludes the color represented by setting $b_1$ and $b_2$ to true. Observe that the clause is $\KROM$.
  \end{itemize}

  It follows from the definition of $\phi$ that $\phi[\theta] \in \LTL_\KROM^{\supBoxstar}$ for every assignment $\theta$ of the variables in $B$. Hence, $B$ is a strong $(\KROM, \Boxstar)$\hy backdoor of size two of $\phi$ as required. Since moreover $\phi$ can be constructed in polynomial time, it only remains to show that $G$ has a $3$-Colouring if and only if $\phi$ is satisfiable.
  \begin{figure}[t]
    % \begin{tabular}{p{0.4\textwidth}p{0.4\textwidth}}
    \begin{center}
      \begin{tikzpicture}[x=.5cm, y=1cm]
        \tikzstyle{ver}=[circle,inner sep=2pt,draw]
        \tikzstyle{every node}=[]
        \tikzstyle{every edge}=[line width=1pt,draw]
        
        \begin{scope}
          \node[ver, label=below:$1$] (v1) at (0,0) {$v_1$};
          \node[ver, label=above:$2$] (v2) at (1,1) {$v_2$};
          \node[ver, label=below:$3$] (v3) at (2,0) {$v_3$};
          
          \draw
          (v1) edge (v2)
          (v2) edge (v3)
          (v1) edge (v3)
          ;
        \end{scope}
      \end{tikzpicture}\quad
      \raisebox{1.1cm}{\begin{tabular}{c@{$\;\;$}cc@{$\;\;$}ccc@{$\;\;$}ccc@{$\;\;$}ccc@{$\;\;$}ccc}\toprule
         & $b_1$ & $b_2$ & $v_1$ & $v_2$ & $v_3$ & $e_{v_1v_2}^{b_1b_2}$ &
         $e_{v_1v_2}^{\bar{b}_1b_2}$ & $e_{v_1v_2}^{b_1\bar{b}_2}$ &
         $e_{v_1v_3}^{b_1b_2}$ &
         $e_{v_1v_3}^{\bar{b}_1b_2}$ & $e_{v_1v_3}^{b_1\bar{b}_2}$ & $e_{v_2v_3}^{b_1b_2}$ &
         $e_{v_2v_3}^{\bar{b}_1b_2}$ & $e_{v_2v_3}^{b_1\bar{b}_2}$  \\\midrule
        1 & 0 & 0 & 0 & 1 & 1 & 1 & 0 & - & 1 & - & 0 & - & 1 & 0 \\
        2 & 1 & 0 & 1 & 0 & 1 & 1 & 0 & - & 1 & - & 0 & - & 1 & 0 \\
        3 & 0 & 1 & 1 & 1 & 0 & 1 & 0 & - & 1 & - & 0 & - & 1 & 0\\\bottomrule
      \end{tabular}}
    \end{center}
    \caption{Left: A graph $G$ with vertices $v_1$, $v_2$, and $v_3$ together with a $3$-Colouring given by the numbers above and below respectively of every vertex. 
      Right: A temporal interpretation $\M$ that corresponds to the given $3$-Colouring of $G$ and satisfies $\M \models\phi$ given as a table. Each row of the table corresponds to a world %(or a set of worlds) 
      as indicated by the first column of the table. Each column represents the assignments of a variable as indicated in the first row. A ``-'' indicates that the assignment is not fixed, i.e., the assignment does not influence whether $\M \models\phi$.}\label{fig:box-krom-hardness}
  \end{figure}
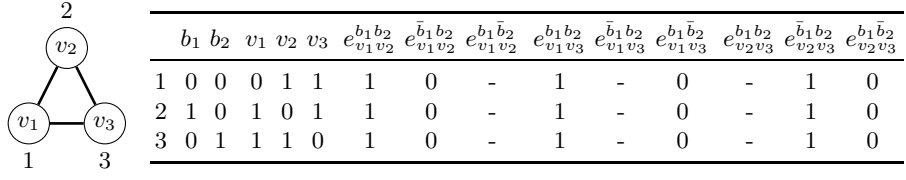  
\iflong    
  Towards showing the forward direction assume that $G$ has a $3$-Colouring and let $f \colon V(G) \rightarrow \{1,2,3\}$ be such a $3$-Colouring for $G$. We will show that $\phi$ is satisfiable by constructing a temporal interpretation $\M$ such that $\M \models\phi$. $\M$ is defined as follows:
  \begin{itemize}
  \item For every $i$ with $1 \leq i \leq n$, we set $\M(v_i)=\N\setminus f(v_i)$.
  \item We set $\M(b_1)=\{2\}$ and $\M(b_2)=\{3\}$.
  \item For every $e=\{v_i,v_j\} \in E(G)$:
    \begin{itemize}
    \item if $f(v_i)=1$ set $\M(e_{v_iv_j}^{b_1b_2})=\N$, else
      set $\M(e_{v_iv_j}^{b_1b_2})=\emptyset$.
    \item if $f(v_i)=2$ set $\M(e_{v_iv_j}^{\bar{b}_1b_2})=\N$, else
      set $\M(e_{v_iv_j}^{\bar{b}_1b_2})=\emptyset$.
    \item  if $f(v_i)=3$ set $\M(e_{v_iv_j}^{b_1\bar{b}_2})=\N$, else
      set $\M(e_{v_iv_j}^{b_1\bar{b}_2})=\emptyset$.
    \end{itemize}
  \end{itemize}
  
  An example for such a temporal interpretation resulting for a simple graph is illustrated in Figure~\ref{fig:box-krom-hardness}.
  Towards showing that $\M \models \phi$, we consider the different types of clauses given in (C1)--(C3).
  \begin{itemize}
  \item The clauses in (C1) hold because $\M, f(v_i) \not\models v_i$ for every $i$ with $1 \leq i \leq n$.
  \item For every $e=\{v_i,v_j\} \in E(G)$, we have to show that the clauses given in (C2) are satisfied for every world. 
    Because $f$ is a $3$-Colouring of $G$, we obtain that $f(v_i)\neq f(v_j)$. W.l.o.g. we assume in the following that $f(v_i)=1$ and $f(v_j)=2$. We first consider the clauses given in (C2) containing $v_i$. 
    Because $\M(v_i)=\N\setminus \{1\}$, it only remains to consider the world $1$.
    In this world $b_1$ and $b_2$ are false. It follows that all clauses containing either $\lnot b_1$ or $\lnot b_2$ are satisfied in this world.
    Hence, it only remains to consider clauses of the form $v_i \lor \Boxstar e_{v_iv_j}^{b_1b_2} \lor b_1 \lor b_2$. But these are satisfied because $f(v_i)=1$ implies that $\M(e_{v_iv_j}^{b_1b_2})=\N$.

    Consider now the clauses given in (C2) that contain $v_j$. Using the same argumentation as used above for $v_i$, we obtain that we only need to consider world $2$ and moreover we only need to consider clauses of the form $v_j \lor \lnot \Boxstar e_{v_iv_j}^{\bar{b}_1b_2}\lor \lnot b_1 \lor b_2$. Because $f(v_i)=1$, we obtain that $\M(e_{v_iv_j}^{\bar{b}_1b_2})=\emptyset$, which implies that also these clauses are satisfied.
  \item The clause $\lnot b_1 \lor \lnot b_2$ is trivially satisfied, because there is no world in which $b_1$ and $b_2$ holds simultaneously.
  \end{itemize}

  Towards showing the reverse direction assume that $\phi$ is satisfiable and let $\M$ be a temporal interpretation witnessing this. First note that because of the clauses added by C1, it holds that $\M(v_i)\neq \N$ for every $i$ with $1\leq i \leq n$. Let $w \colon V(G) \rightarrow \N$ be defined such that for every $i$ with $1 \leq i \leq n$, $w(v_i)$ is an arbitrary world in $\N\setminus \M(v_i)$. We define $f \colon V(G) \rightarrow\{1,2,3\}$ by setting:
  \begin{itemize}
  \item $f(v_i)=1$ if $\M,w(v_i) \not\models b_1 \lor b_2$,
  \item $f(v_i)=2$ if $\M,w(v_i) \not\models \lnot b_1 \lor b_2$, and
  \item $f(v_i)=3$ if $\M,w(v_i) \not\models b_1 \lor \lnot b_2$.
  \end{itemize}
  
  Note that because of the clause added by (C3), $f$ assigns exactly one color to every vertex $v_i$ of $G$. We claim that $f$ is a $3$-Colouring of $G$. To show this it suffices to show that for every $e=\{v_i,v_j\}\in E(G)$, it holds that $f(v_i)\neq f(v_j)$. Assume for a contradiction that this is not the case, i.e., there is an edge $e=\{v_i,v_j\}\in E(G)$ such that $f(v_i)=f(v_j)$. W.l.o.g.~assume furthermore that $f(v_i)=f(v_j)=1$. Consider the clause $v_i \lor\Boxstar e_{v_iv_j}^{b_1b_2} \lor b_1 \lor b_2$ (which was added by C2). Then, because of the definition of $w$ and $f$, we obtain that $\M, w(v_i) \not\models v_i \lor b_1 \lor b_2$. It follows that $\M,w(v_i) \models \Boxstar e_{v_iv_j}^{b_1b_2}$. Consider now the clause $v_j \lor \lnot \Boxstar e_{v_iv_j}^{b_1b_2} \lor b_1 \lor b_2$ (which was added by C2). Then, again because of the choice of $w$ and $f$, we obtain that $\M,w(v_j)\not\models v_j \lor b_1 \lor b_2$. Hence, $\M,w(v_j)\models \lnot \Boxstar e_{v_iv_j}^{b_1b_2}$ contradicting $\M,w(v_i)\models \Boxstar e_{v_iv_j}^{b_1b_2}$. This completes the proof of the theorem.
 \else
   The proof of the equivalence is in the technical report \cite{mso16}.
 \fi
\qed
\end{proof}

\subsection{Globally in the past and globally in the future}

Now we turn to a more flexible fragment where now we can talk about the past as well as about the future. Through this it is possible to encode $\NP$-complete problems into the $\HORN$-fragment yielding a $\paraNP$ lower bound of the problem.

\begin{theorem}\label{thm:eval-boxfboxp-horn}
  $\BDeval{\Box_F,\Box_P}{\HORN}$ is  $\paraNP$-complete.
\end{theorem}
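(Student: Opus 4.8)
The plan is to prove the two directions of $\paraNP$-completeness separately. For membership I would proceed exactly as in Theorem~\ref{thm:eval-boxstar-krom}: since the satisfiability problem for the unparameterised fragment $\LTL_\CNF^{\{\Box_F,\Box_P\}}$ lies in $\NP$ \cite[Table 1]{ArtaleKRZ13}, a nondeterministic machine can ignore the backdoor $X$ entirely and simply guess-and-verify a temporal model of $\phi$ in time polynomial in $|\phi|$. This yields an $\NP$ upper bound that is independent of $|X|$, and hence places $\BDeval{\Box_F,\Box_P}{\HORN}$ in $\paraNP$.

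For hardness I would give a polynomial-time reduction from a fixed $\NP$-hard problem (mirroring the $\KROM$ case, e.g.\ $\threeCOL$) that always outputs an instance whose backdoor $B$ has some fixed constant size $\ell$; by the characterisation of $\paraNP$-hardness recalled in the preliminaries, it then suffices to establish $\NP$-hardness at the single parameter value $\kappa=\ell$. Concretely, I would build $\phi\dfn\Psi\land\Boxstar\Phi$ over the operators $\Box_F,\Box_P$ in which every clause is $\HORN$ \emph{after} the backdoor variables of $B$ are deleted — so that $B$ is a valid strong $(\HORN,\{\Box_F,\Box_P\})$-backdoor by construction — while the full formula (with the $B$-variables still present) encodes the chosen $\NP$-hard constraints. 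As in the $\KROM$ proof I expect to need several sign-variants of each constraint clause indexed by the possible assignments of the constant backdoor, so that at each world the backdoor value selects exactly one active variant and satisfies (hence removes) the others.

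The essential new ingredient over the $\Boxstar$-only setting is the future/past asymmetry, which is what must supply the hardness: recall that $\BDeval{\supBoxstar}{\HORN}$ is in $\FPT$ (Theorem~\ref{thm:bd-eval-horn}), so no $\Boxstar$-only gadget can work and the reduction must genuinely use both $\Box_F$ and $\Box_P$. I would exploit the fact that, in any model over $(\Z,<)$, the truth value of $\Box_F x$ (resp.\ $\Box_P x$) along the time line is a monotone step function — false then eventually true (resp.\ true then eventually false). This lets a single proposition serve as a \emph{global channel} whose value is forced in one time direction by one endpoint of a constraint and in the opposite direction by the other endpoint, so that a forbidden combination forces the same channel both true and false and renders $\phi$ unsatisfiable. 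Crucially, such forcing clauses can be written as implications ``trigger $\Rightarrow$ channel'' with a \emph{negative} occurrence of the guarding (trigger) variable, which keeps them within the one-positive-literal $\HORN$ shape once the backdoor is removed.

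The hard part will be the gadget design. Unlike the $\KROM$ reduction, where it sufficed to leave at most \emph{two} literals after the reduct, here every reduct clause must contain at most \emph{one positive} literal, so I cannot reuse clauses such as $v_i\lor\Boxstar e\lor b_1\lor b_2$, which retains two surviving positives. I would therefore re-encode each vertex (or variable) so that the selection information is carried by negated literals together with the $\Box_F/\Box_P$ channels, and then verify two things in tandem: (i) that for \emph{every} assignment of $B$ the reduct is genuinely $\HORN$ (the backdoor-validity check, analogous to the ``$\KROM$ after deleting $B$'' observations in (C1)--(C3) of the previous proof), and (ii) that $\phi$ is satisfiable if and only if the source instance is a yes-instance, where the forward direction exhibits an explicit temporal model (an analogue of Figure~\ref{fig:box-krom-hardness}) and the backward direction recovers a solution by reading off the monotone step functions of the channel variables. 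Making the polarity bookkeeping of (i) and (ii) line up at once is where I expect the real difficulty to lie.
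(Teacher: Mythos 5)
Your membership argument and your overall hardness framework are correct and coincide with the paper's: $\NP$ membership of satisfiability for $\LTL_\CNF^{\Box_F,\Box_P}$ gives $\paraNP$ membership, and $\paraNP$-hardness is obtained from a polynomial-time reduction from $\threeCOL$ whose output always has a constant-size backdoor (the paper uses size four). However, the part you explicitly defer --- the gadget --- is the entire substance of the paper's proof, and the direction you sketch runs into an obstacle you do not resolve. You propose to transplant the $\KROM$ mechanism (a per-edge channel variable forced in opposite temporal directions by the two endpoints of the edge), repairing polarities by writing each forcing clause as an implication with a negative trigger. That repairs the forcing clauses, but it breaks the other half of the encoding. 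If the trigger is to occur negatively, a vertex must signal its colour by being \emph{true} at a designated world, and then you must enforce that every vertex variable is true at one of the three colour worlds. ``True somewhere'' is not expressible in the clausal fragment (a clause may contain $\lnot\Boxstar v$ or $\lnot\Box_F v$, but negation cannot be pushed under the box, so there is no literal meaning $\lnot\Box_F\lnot v$), and simulating it with complement variables needs clauses of the form $v\lor\bar v$, which have two positive literals over $n$ non-backdoor variables. Conversely, if vertices signal by being \emph{false} (as in the $\KROM$ proof, where $\lnot\Boxstar v_i$ is even $\HORN$), then the forcing clause must contain $v_i$ positively, and it can force a channel only negatively ($v_i\lor\lnot\Boxstar e$ is $\HORN$, but $v_i\lor\Boxstar e$ is not), so both endpoints push the channel in the same direction and no contradiction can ever be produced. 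Either way your plan stalls exactly at the ``each vertex gets at least one colour'' constraint, which is inherently non-$\HORN$ and cannot be distributed over non-backdoor variables.

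The paper resolves this with an idea that is absent from your proposal: it uses $\Box_F$ and $\Box_P$ to build a \emph{counter} that makes individual worlds addressable, and then uses the time line as a multiplexer for a single, constant-size set of colour variables. Concretely, $\Psi$ asserts a start variable $s$, and $\HORN$ clauses anchored at $s$, propagated forward via $\Box_F$ and backward via $\Box_P$ through one extra backdoor variable $p_n'$, force auxiliary variables $p_1,\dots,p_n$ to be false exactly at worlds $1,\dots,n$ respectively, so that $\Box_F p_i\land\Box_P p_i$ holds precisely at world $i$. The three colour variables $c_1,c_2,c_3$ are put in the backdoor, so the non-$\HORN$ clause $c_1\lor c_2\lor c_3$ (plus mutual exclusion) costs nothing; the guarded $\HORN$ clauses $\Box_F p_i\land\Box_P p_i\land c_j\rightarrow v_i^j$ and $\Box_F p_i\land\Box_P p_i\land v_i^j\rightarrow c_j$ copy the colour chosen at world $i$ into variables $v_i^j$ made rigid by $v_i^j\rightarrow\Box_F v_i^j$ and $v_i^j\rightarrow\Box_P v_i^j$; and edges are checked by purely negative, hence $\HORN$, clauses $\lnot v_i^c\lor\lnot v_j^c$ --- no per-edge channels at all. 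So the backdoor is $\{c_1,c_2,c_3,p_n'\}$, and the genuine use of both $\Box_F$ and $\Box_P$ is the counter (this is also why the construction cannot work with $\Boxstar$ alone, consistently with Theorem~\ref{thm:bd-eval-horn}), not endpoint-versus-endpoint contradictions. Without this (or an equivalent) device for the at-least-one-colour constraint, your reduction cannot be completed, so the gap is genuine.
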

\begin{proof}
  The membership in $\paraNP$ follows because the satisfiability of 
  $\LTL_\CNF^{\Box_F,\Box_P}$ can be decided in $\NP$~\cite[Table 1]{ArtaleKRZ13}.

  We show $\paraNP$-hardness of $\BDeval{\Box_F,\Box_P}{\HORN}$ by stating again a polynomial time reduction from $\threeCOL$ to $\BDeval{\Box_F,\Box_P}{\HORN}$ for backdoors of size four. In $\threeCOL$ one asks whether a given input graph $G=(V,E)$ has a colouring $f\colon V(G) \rightarrow\{1,2,3\}$ of its vertices with at most three colours such that $f(v)\neq f(u)$ for every edge $\{u,v\}$ of $G$. Given such a graph $G=(V,E)$, we will construct an $\LTL^{\Box_F,\Box_P}_\CNF$ formula $\phi\dfn\Psi \land \Boxstar\Phi$, which has a strong $(\HORN, \{\Box_F,\Box_P\})$-backdoor $B$ of size four, such that the graph $G$ has a $3$-colouring if and only if $\phi$ is satisfiable.

  For the remainder we will assume that $V(G)=\{v_1,\dotsc,v_n\}$ and $E(G)=\{e_1,\dotsc,e_m\}$. The formula $\phi$ contains the following variables:
  \begin{itemize}
  \item[(V1)] The variables $c_1,c_2,c_3,p_n'$ . These variables make up the backdoor set $B$, i.e., $B\dfn\{c_1,c_2,c_3,p_n'\}$.
  \item[(V2)] The variable $s$, which indicates the starting world. 
  \item[(V3)] For every $i$ with $1 \leq i \leq n$, three variables $v_i^1,v_i^2,v_i^3$.
  \item[(V4)] For every $i$ with $1 \leq i \leq n$ the variable $p_i$.
  \end{itemize}
  
  We set $\Psi$ to be the formula $s$ and the formula $\Phi$ contains the following clauses:
  \begin{itemize}
  \item[(C1)] The clauses $c_1 \lor c_2 \lor c_3$, $\lnot c_1 \lor\lnot c_2 \lor \lnot c_3$, $c_1 \lor \lnot c_2 \lor \lnot c_3$, $\lnot c_1 \lor \lnot c_2 \lor c_3$, and $\lnot c_1 \lor c_2 \lor\lnot c_3$. Informally, these clauses ensure that in every world it holds that exactly one of the variables $c_1,c_2,c_3$ is true. Note that $c_1 \lor c_2 \lor c_3$ is not $\HORN$, however, all of its variables are contained in the backdoor set $B$. 
  \item[(C2)] For every $i$ and $c$ with $1 \leq i \leq n$ and $1 \leq c \leq 3$, the clauses $v_i^c \rightarrow \Box_F v_i^c=\lnot v_i^c \lor \Box_F v_i^c$ and $v_i^c \rightarrow \Box_P v_i^c=\lnot v_i^c \lor \Box_P v_i^c$. Informally, these clauses ensure that the variable $v_i^c$ either holds in every world or in no world for every $i$ and $c$ as above. Observe that both of these clauses are $\HORN$.
  \item[(C3)] Informally, the following set of clauses ensures together that for every $i$ with $1\leq i \leq n$, it holds that $p_i$ is true in every world apart from the $i$-th world (where $p_i$ is false). Here, the first world is assumed to be the starting world.
    \begin{itemize}
    \item[(C3-1)] The clauses $s \rightarrow \lnot p_1=\lnot s \lor \lnot p_1$, $s \rightarrow\Box_F p_1=\lnot s \lor \Box_F p_1$, and $s \rightarrow \Box_P p_1=\lnot s \lor \Box_P p_1$. Informally, these clauses ensure that $p_1$ is only false in the starting world (and otherwise true).
    \item[(C3-2)] The clause $p_i \land \Box_F p_i\rightarrow \Box_F p_{i+1}=\lnot p_i \lor \lnot \Box_F p_i \lor \Box_F p_{i+1}$ for every $i$ with $1 \leq i < n$. Informally, these clauses (together with the clauses from C3-1) ensure that for every $i$ with $2 \leq i\leq n$, it holds that $p_i$ is true in every world after the $i$-th world.
    \item[(C3-3)] The clause $\lnot p_i \rightarrow \lnot\Box_F p_{i+1}=p_i \lor \lnot \Box_F p_{i+1}$ for every $i$ with $1 \leq i < n$. Informally, these clauses (together with the clauses from C3-1 and C3-2) ensure that for every $i$ with $2 \leq i\leq n$, it holds that $p_i$ is false at the $i$-th world. Observe that the clauses from C3-1 to C3-3 already ensure that $\lnot p_i \land\Box_F p_i$ holds if and only if we are at the $i$-th world of the model for every $i$ with $1 \leq i \leq n$.
    \item[(C3-4)] The clauses $\lnot p_n\land \Box_F p_n\rightarrow p_n'=p_n\lor \lnot \Box_F p_n \lor p_n'$ and $\lnot p_n\land \Box_F p_n \leftarrow p_n'=\lnot p_n\land \Box_F p_n \lor \lnot p_n'=(\lnot p_n \lor \lnot p_n') \land (\Box_F p_n \lor \lnot p_n')$. Informally, these clauses (together with the clauses from C3-1 to C3-3) ensure that $p_n'$ only holds in the $n$-th world of the model. Observe that all these clauses are $\HORN$ after removing the backdoor set variable $p_n'$.
    \item[(C3-5)] The clause $p_n' \rightarrow \Box_P p_n=\lnot p_n' \lor \Box_P p_n$. 
      Informally, this clause (together with the clauses from C3-1 to C3-4) ensures that $p_n$ is only false in the $n$-th world of the model.
    \item[(C3-6)] The clause $p_i \land \Box_P p_i \rightarrow \Box_P p_{i-1}=\lnot p_i \lor \lnot \Box_P p_i \lor \Box_P p_{i-1}$ for every $i$ with $2 \leq i \leq n$. Informally, these clauses (together with the clauses from C3-1 to C3-5) ensure that $p_i$ is true before the $i$-th world for every $i$ with $2 \leq i < n$.
     \end{itemize}
     Observe that all of the above clauses are $\HORN$ or become $\HORN$ after removing all variables from $B$. Note furthermore that all the above clauses ensure that $\Box_P p_i \land \Box_F p_i$ holds if and only if we are at the $i$-th world of the model for every $i$ with $1 \leq i \leq n$. 
  \item[(C4)] For every $i$ and $j$ with $1 \leq i \leq n$ and $1\leq j \leq 3$ the clauses $\Box_Fp_i \land \Box_P p_i \land v_i^j \rightarrow c_j=\lnot \Box_F p_i \lor\lnot \Box_P p_i \lor \lnot v_i^j \lor c_j$ and $\Box_F p_i \land \Box_P p_i \land c_j \rightarrow v_i^j=\lnot\Box_F p_i\lor \lnot \Box_P p_i \lor \lnot c_j \lor v_i^j$. Informally, these clauses ensure that in the $i$-th world for every $1 \leq i\leq n$, the variables $c_1$, $c_2$, $c_3$ are a copy of the variables $v_i^1$, $v_i^2$, $v_i^3$. Observe that all of these clauses are $\HORN$. 
  \item[(C5)] For every edge $e=\{v_i,v_j\} \in E(G)$ and every $c$ with $1 \leq c \leq 3$, the clause $\lnot v_i^c \lor \lnot v_j^c$. Informally, these clauses ensure that the $3$-partition (of the vertices of $G$) given by the (global) values of the variables $v_1^1,v_1^2,v_1^3,\dotsc,v_n^1,v_n^2,v_n^3$ is a valid $3$-Colouring for $G$. Observe that all of these clauses are $\HORN$. 
 \end{itemize}
  
  It follows from the definition of $\phi$ that $\phi[\theta] \in\LTL_\HORN^{\Box_F,\Box_P}$ for every assignment $\theta$ of the variables in $B$. Hence, $B$ is a strong $(\HORN,\{\Box_P,\Box_F\})$-backdoor of size four of $\phi$ as required.
  Since moreover $\phi$ can be constructed in polynomial time, it only remains to show that $G$ has a $3$-Colouring if and only if $\phi$ is satisfiable.
\iflong
  \begin{figure}
    % \begin{tabular}{p{0.4\textwidth}p{0.4\textwidth}}
    \begin{center}
      \begin{tikzpicture}[xscale=1,yscale=1, node distance=1.8cm]
        \tikzstyle{ver}=[circle,inner sep=2pt,draw]
        \tikzstyle{every node}=[]
        \tikzstyle{every edge}=[line width=1pt,draw]
        
        \begin{scope}
          \draw 
          node[ver, label=below:$1$] (v1) {$v_1$}
          node[ver,above right of=v1, label=above:$2$] (v2) {$v_2$}
          node[ver,below right of=v2, label=below:$3$] (v3) {$v_3$}
          ;
          
          \draw
          (v1) edge (v2)
          (v2) edge (v3)
          (v1) edge (v3)
          ;
        \end{scope}
      \end{tikzpicture}
      \qquad
      \raisebox{1.25cm}{\setlength{\tabcolsep}{.1mm}\begin{tabular}{c@{$\;\;$}ccccc@{$\;\;$}ccc@{$\;\;$}ccc@{$\;\;$}ccc@{$\;\;$}ccc}\toprule
         & $s$ & $c_1$ & $c_2$ & $c_3$ & $p_n'$ & $v_1^1$ & $v_1^2$ & $v_1^3$ &
        $v_2^1$ & $v_2^2$ & $v_2^3$ & $v_3^1$ & $v_3^2$ & $v_3^3$ & $p_1$
        & $p_2$ & $p_3$ \\\midrule
        $<1$ & 0 & - & - & - & 0 & 1 & 0 & 0 & 0 & 1 & 0 &
        0 & 0 & 1 & 1 & 1 & 1 \\
        
        1 & 1 & 1 & 0 & 0 & 0 & 1 & 0 & 0 & 0 & 1 & 0 &
        0 & 0 & 1 & 0 & 1 & 1 \\
        2 & 0 & 0 & 1 & 0 & 0 & 1 & 0 & 0 & 0 & 1 & 0 &
        0 & 0 & 1 & 1 & 0 & 1 \\
        3 & 0 & 0 & 0 & 1 & 1 & 1 & 0 & 0 & 0 & 1 & 0 &
        0 & 0 & 1 & 1 & 1 & 0 \\
        
        $>3$ & 0 & - & - & - & 0 & 1 & 0 & 0 & 0 & 1 & 0 &
        0 & 0 & 1 & 1 & 1 & 1 \\\bottomrule
      \end{tabular}}
    \end{center}
    \caption{Left: A graph $G$ with vertices $v_1$, $v_2$, and $v_3$ together with a $3$-Colouring given by the numbers above and below respectively of every vertex. 
      Right: A temporal interpretation $\M$ that corresponds to the given $3$-Colouring of $G$ and satisfies $\M \models\phi$ given as a table. Each row of the table corresponds to a world (or a set of worlds) as indicated by the first column of the table. Each column represents the assignments of a variable as indicated in the first row. A ``-'' indicates that the assignment is not fixed, i.e., the assignment does not influence whether $\M \models\phi$.}
    \label{fig:box-horn-hardness}
  \end{figure}  
    
  Towards showing the forward direction assume that $G$ has a $3$-Colouring and let $f \colon V(G) \rightarrow \{1,2,3\}$ be such a $3$-Colouring for $G$. We will show that $\phi$ is satisfiable by constructing a temporal interpretation $\M$ such that $\M \models\phi$. $\M$ is defined as follows:
  \begin{itemize}
  \item For every $j$ with $1\leq j \leq 3$, we set $\M(c_j)=\SB i \SM f(v_i)=j\SE$.
  \item We set $\M(p_n')=\{n\}$.
  \item For every $i$ and $c$ with $1 \leq i \leq n$ and $1\leq c \leq3$, we set $\M(v_i^{c})=\Z$ if $c=f(v_i)$ and otherwise we set $\M(v_i^c)=\emptyset$.
  \item For every $i$ with $1 \leq i \leq n$, we set $\M(p_i)=\Z \setminus \{i\}$.
  \end{itemize}
  
  An example for such a temporal interpretation resulting for a simple graph is illustrated in Figure~\ref{fig:box-horn-hardness}.
  It is straightforward (but a little tedious) to verify that $\M \models \phi$ by considering all the clauses of $\phi$.

  Towards showing the reverse direction assume that $\phi$ is satisfiable and let $\M$ be a temporal interpretation witnessing this. We will start by showing the following series of claims for $\M$.
  \begin{itemize}
  \item[(M1)] For every $a \in \N$ exactly one of $\M, a \models c_1$, $\M, a \models c_2$, and $\M, a \models c_3$ holds.
  \item[(M2)] For every $i$, $c$, $a$, and $a'$ with $1 \leq i \leq n$, $1 \leq c \leq 3$, and $a,a' \in \N$, it holds that $\M, a \models v_i^c$ if and only if $\M, a' \models v_i^c$.
  \item[(M3)] For every $i$ with $1 \leq i \leq n$ and every $a \in \N$, it holds that $\M, a \models p_i$ if and only if $a \neq i$.
  \item[(M4)] For every $i$ and $j$ with $1 \leq i \leq n$ and $1 \leq j \leq 3$, it holds that $\M, i \models c_j$ if and only if $\M,i \models v_i^j$.
  \end{itemize}
  
  (M1) holds because of the clauses added by (C1). Towards showing (M2) consider the clauses added by (C2) and assume for a contradiction that there are $i$, $c$, $a$, and $a'$ as in the statement of (M2) such that w.l.o.g. $\M, a \models v_i^c$ but $\M, a' \not\models v_i^c$. Then, $a \neq a'$. If $a < a'$, then we obtain a contradiction because of the clause $v_i^c \rightarrow\Box_F v_i^c$ and if on the other hand $a' < a$, we obtain a contradiction to the clause $v_i^c \rightarrow \Box_P v_i^c$. This completes the proof of (M2).
  We will show (M3) with the help of the following series of claims.
  \begin{itemize}
  \item[(M3-1)] For every $a \in \N$ it holds that $\M,a \models p_1$ if and only if $a \neq 1$ (here we assume that $1$ is the starting world).
  \item[(M3-2)] For every $i$ and $a$ with $1 \leq i \leq n$, $a \in \N$, and $a>i$, it holds that $\M,a \models p_i$.
  \item[(M3-3)] For every $i$ with $1 \leq i \leq n$, it holds that $\M,i \not\models p_i$.
  \item[(M3-4)] For every $a \in \N$, it holds that $\M,a \models p_n'$ if and only if $a=n$.
  \item[(M3-5)] For every $a \in \N$, it holds that $\M,a \not\models p_n$ if and only if $a=n$.
  \end{itemize}

  Because of the clause $s\rightarrow \lnot p_1$ (added by C3-1) and the fact that $s \in \Psi$, we obtain that $\M,1 \not\models p_1$. Moreover, because of the clauses $s \rightarrow \Box_F p_1$ and $s \rightarrow \Box_P p_1$, we obtain that $\M,a \models p_1$ for every $a \neq 1$. This completes the proof for (M3-1). 

  We show (M3-2) via induction on $i$. The claim clearly holds for $i=1$ because of (M3-1). Now assume that the claim holds for $p_{i-1}$ and we want to show it for $p_i$. Because of the induction hypothesis, we obtain that $\M,i \models p_{i-1} \land\Box_Fp_{i-1}$. Moreover, because $\phi$ contains the clause $p_{i-1} \land \Box_F p_{i-1} \rightarrow \Box_F p_i$ (which was added by (C3-2)), we obtain that $\M,i \models \Box_F p_i$. This completes the proof of (M3-2). 

  We show (M3-3) via induction on $i$. The claim clearly holds for $i=1$ because of (M3-1). Now assume that the claim holds for $p_{i-1}$ and we want to show it for $p_i$. Because of  the induction hypothesis, we obtain that $\M,(i-1) \not\models p_{i-1}$. Furthermore, because of (M3-2), we know that $\M,i \models \Box_Fp_{i}$. Since $\phi$ contains the clause $\lnot p_{i-1} \rightarrow \lnot \Box_F p_{i}$ (which was added by (C3-3)), we obtain $\M, (i-1) \models \lnot \Box_F p_i$, which because $\M,i \models \Box_Fp_{i}$ can only hold if $\M,i \not\models p_i$. This completes the proof of (M3-3).

  Towards showing (M3-4), first note that because of (M3-2) and (M3-3), we have that $\M,a \models \lnot p_n \land \Box_F p_n$ if and only if $a=n$. Then, because of the clauses (added by C3-4) ensuring that $\lnot p_n \land \Box_F p_n \leftrightarrow p_n'$, the same applies to $p_n'$ (instead of $\lnot p_n \land \Box_F p_n$). This completes the proof of (M3-4).

  It follows from (M3-2) and (M3-3) that (M3-5) holds for every $a \in \N$ with $a \geq n$. Moreover, because of (M3-4), we have that $\M, n \models p_i'$. Because of the clause $p_n' \rightarrow \Box_P p_n$ (which was added by (C3-5)), we obtain $\M,a \models p_n$ for every $a<n$. This completes the proof of (M3-5).

  We are now ready to proof (M3). It follows from (M3-2) and (M3-3) that (M3) holds for every $i$ and $a$ with $a \geq i$. Furthermore, we obtain from (M3-5) that (M3) already holds if $i=n$. We complete the proof of (M3) via an induction on $i$ starting from $i=n$. Because of the induction hypothesis, we obtain that $\M,i+1 \models p_{i+1} \land \Box_P p_{i+1}$. Hence, because of the clause $p_{i+1}\land \Box_P p_{i+1}\rightarrow \Box_P p_{i}$ (added by (C3-6)), we obtain that $\M,i+1 \models \Box_P p_i$, which completes the proof of (M3).

  Towards showing (M4) first note that it follows from (M3) that $\M, i \models \Box_F p_i \land \Box_P p_i$.
  Now suppose that there are $i$ and $j$ such that either $\M, i \models c_j$ but $\M,i \not \models v_i^j$ or $\M, i \not\models c_j$ but $\M,i \models v_i^j$. In the former case, consider the clause $\Box_F p_i \land \Box_P p_i \land c_j \rightarrow v_i^j$ (which was added by (C4)). Since $\M, i \models \Box_F p_i \land \Box_P p_i$, we obtain that $\M,i \models v_i^j$; a contradiction. In the later case, consider the clause $\Box_F p_i \land \Box_P p_i \land v_i^j \rightarrow c_j$ (which was added by (C4)). Since $\M, i \models \Box_F p_i \land \Box_P p_i$, we obtain that $\M,i \models c_j$; again a contradiction. This completes the proof of the claims (M1)--(M4).

  It follows from (M1) and (M4) that for every $i$ and $a$ with $1 \leq i\leq n$ and $a \in \N$ there is exactly one $c$ with $1 \leq c \leq 3$, such that $\M,a \models v_i^c$. Moreover, because of (M2) the choice of $c$ is independent of $a$. Hence, the colouring $f$ that assigns the unique color $c$ to every vertex $v_i$ such that $\M,a \models v_i^c$ forms a partition of the vertex set of $G$. We claim that $f$ is also a valid $3$-Colouring of $G$. Assume not, then there is an edge $\{v_i,v_j\} \in E(G)$ such that $c=f(v_i)=f(v_j)$.
  Consider the clause $\lnot v_i^{c} \lor \lnot v_j^{c}$ (which was added by C5). Because of the definition of $f$, we obtain that $\M,a \not\models \lnot v_i^{c} \lor \lnot v_j^{c}$ for every $a \in \N$, a contradiction to our assumption that $\M \models \phi$.
 \else
  The proof can be found in the technical report \cite{mso16}.
 \fi
\qed
\end{proof}

\begin{corollary}\label{cor:eval-boxf-paraNP}
  Let $O\in\{\BoxF,\BoxP\}$ then $\BDeval{O}{\KROM}$ is $\paraNP$-complete.
\end{corollary}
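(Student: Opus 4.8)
The plan is to treat the two operators together: since $\LTL^{\BoxP}$ is the image of $\LTL^{\BoxF}$ under reversing the flow of time (the bijection $n\mapsto -n$ fixes the world $0$ and exchanges $\BoxF$ with $\BoxP$), it suffices to prove the statement for $O=\BoxF$ and then transfer it to $O=\BoxP$ by applying the whole construction to the time-reversed model. Membership in $\paraNP$ is immediate, exactly as in Theorem~\ref{thm:eval-boxstar-krom}: the satisfiability of $\LTL_\CNF^{\BoxF}$ (and of $\LTL_\CNF^{\BoxP}$) lies in $\NP$ by \cite[Table 1]{ArtaleKRZ13}, so one may ignore the backdoor and decide satisfiability of the input formula directly, placing the problem in $\NP\subseteq\paraNP$.

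For hardness I would reduce from $\threeCOL$ and re-use the skeleton of the reduction in Theorem~\ref{thm:eval-boxstar-krom}: the backdoor $B=\{b_1,b_2\}$ of size two, the vertex variables $v_i$, the empty $\Psi$, the clause $\lnot b_1\lor\lnot b_2$ of (C3), and the colour-selecting patterns $\pi_1=b_1\lor b_2$, $\pi_2=\lnot b_1\lor b_2$, $\pi_3=b_1\lor\lnot b_2$ (each falsified by exactly the assignment of $B$ encoding one colour). The first change is to replace every $\Boxstar$ by $\BoxF$, so that $\phi=\Psi\land\Boxstar\Phi$ with $\Phi$ restricted to $\BoxF$. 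The essential difficulty is that, in contrast to $\Boxstar x$, the literal $\BoxF x$ is no longer world-independent; what survives is \emph{monotonicity}: the set of worlds satisfying $\BoxF x$ is upward closed, so if $\BoxF x$ holds at a world $a$ and fails at a world $b$ then necessarily $b<a$. The whole gadget must be built around this one-sided property.

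Accordingly, the second and crucial change is to symmetrise the edge gadget. For every edge $\{v_i,v_j\}$ and colour $c\in\{1,2,3\}$ I would introduce two variables $e^{c}_{ij}$ and $e^{c'}_{ij}$ together with the four clauses $v_i\lor\BoxF e^{c}_{ij}\lor\pi_c$, $v_i\lor\lnot\BoxF e^{c'}_{ij}\lor\pi_c$, $v_j\lor\lnot\BoxF e^{c}_{ij}\lor\pi_c$, and $v_j\lor\BoxF e^{c'}_{ij}\lor\pi_c$. After deleting the two variables of $B$ each of them is binary, so $B$ remains a strong $(\KROM,\{\BoxF\})$-backdoor of size two and $\phi$ is constructible in polynomial time. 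In the reverse direction one fixes, for each vertex, a single colour-world $w(v_i)$ (any world at which $v_i$ is false, which exists cofinally because (C1), now $\lnot\BoxF v_i$, must hold at every world) and reads $f(v_i)$ off $b_1,b_2$ there; well-definedness is guaranteed by (C3). If some edge $\{v_i,v_j\}$ were monochromatic of colour $c$, the two clauses containing $v_i$ would make $\BoxF e^{c}_{ij}$ hold and $\BoxF e^{c'}_{ij}$ fail at $w(v_i)$, while the two clauses containing $v_j$ would make $\BoxF e^{c}_{ij}$ fail and $\BoxF e^{c'}_{ij}$ hold at $w(v_j)$; monotonicity then yields $w(v_j)<w(v_i)$ from the unprimed variable and $w(v_i)<w(v_j)$ from the primed one, a contradiction. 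This symmetric pair is exactly what replaces the world-independence argument of Theorem~\ref{thm:eval-boxstar-krom}.

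For the forward direction I would replace the finite model of Theorem~\ref{thm:eval-boxstar-krom} by a \emph{periodic} one, repeating the three colour patterns of $b_1,b_2$ with period three over all of $\Z$ so that every colour occurs cofinally, setting $v_i$ false exactly at the worlds whose pattern encodes $f(v_i)$, and setting each $\M(e)$ globally to $\Z$ or to $\emptyset$. Periodicity is forced on us because (C1) now has to hold at every world and hence requires $v_i$ to be false cofinally; properness of $f$ then guarantees that for each edge and colour at most one endpoint demands its $e$-variable to be true, so the global choices never clash. I expect the reverse direction to be the main obstacle: one must verify that choosing a single $w(v_i)$ per vertex yields a consistent colouring and that the symmetric monotonicity argument genuinely closes off every monochromatic edge irrespective of the unknown relative order of the colour-worlds.
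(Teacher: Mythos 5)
Your proof is correct, but it takes a genuinely different and far heavier route than the paper. The paper disposes of this corollary in one line: by \cite[Theorem~6]{ArtaleKRZ13}, satisfiability of $\LTL^{O}_\KROM$ formulas is already $\NP$-hard, and every $\LTL^{O}_\KROM$ formula is an instance of $\BDeval{O}{\KROM}$ with the \emph{empty} backdoor, so the problem is $\NP$-hard at the fixed parameter value $0$ and hence $\paraNP$-hard (membership is by \cite[Table 1]{ArtaleKRZ13}, exactly as you argue). You instead re-establish hardness from scratch by adapting the $\threeCOL$ reduction of Theorem~\ref{thm:eval-boxstar-krom} to the operator $\BoxF$: the key differences are the symmetrised edge gadget with the pair $e^{c}_{ij},e^{c'}_{ij}$, whose four clauses let you replace the world-independence of $\Boxstar$-literals by the upward-closedness of $\BoxF$-literals (deriving the contradictory inequalities $w(v_j)<w(v_i)$ and $w(v_i)<w(v_j)$ for a monochromatic edge, which also covers the case $w(v_i)=w(v_j)$), and the periodic model in the forward direction, needed because the unit clauses $\lnot\BoxF v_i$ must hold at every world and hence force each $v_i$ to be false cofinally. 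I checked these steps and they go through; the backdoor $\{b_1,b_2\}$ indeed remains a strong $(\KROM,\{\BoxF\})$-backdoor, and the time-reversal transfer from $\BoxF$ to $\BoxP$ is sound since $(\Z,<)$ is isomorphic to its reverse. What your route buys is self-containedness: you do not need the known $\NP$-hardness of the $\KROM$ fragment with $\BoxF$ or $\BoxP$, only its $\NP$ upper bound, and your construction in effect reproves that hardness, moreover for instances with a nontrivial backdoor of size two. What it costs is length: a page-long reduction where citing the known result suffices.
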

\begin{proof}
  Follows from $\NP$-hardness of satisfiability of $\LTL^{O}_\KROM$ formulas \cite[Theorem~6]{ArtaleKRZ13}. \qed
\end{proof}

\section{Conclusion}
\begin{table}[t]
	\centering
	\begin{tabular}{lccc}\toprule
		Problem & Operator & $\HORN$ & $\KROM$ \\\midrule
		Detection&any&$\FPT$ (Thm.~\ref{thm:detect-horn}) & $\FPT$ (Thm.~\ref{thm:detect-krom}) \\
		Evaluation& $\Boxstar$ & $\FPT$ (Thm.~\ref{thm:bd-eval-horn}) & $\paraNP$-c.\ (Thm.~\ref{thm:eval-boxstar-krom}) \\
		&$\BoxF,\BoxP$ & $\paraNP$-c.\ (Thm.~\ref{thm:eval-boxfboxp-horn})& $\paraNP$-c.\ (above)\\
		&$\BoxF$ or $\BoxP$ &open & $\paraNP$-c.\ (Cor.~\ref{cor:eval-boxf-paraNP})
		\\\bottomrule
	\end{tabular}
	\caption{Results overview.}
\end{table}

We lift the well-known concept of backdoor sets from propositional logic up to the linear temporal logic $\LTL$. From the investigated cases we exhibit a parameterised complexity dichotomy for the problem of backdoor set evaluation. The evaluation into $\KROM$ formulas becomes in all cases $\paraNP$-complete and thus is unlikely to be solvable in $\FPT$ whereas the case of backdoor evaluation into the fragment $\HORN$ behaves different.
Allowing only $\Boxstar$ makes the problem fixed parameter tractable however allowing both $\BoxF$ and $\BoxP$ makes it $\paraNP$-complete. The last open case, i.e., the restriction to either $\BoxF$ or $\BoxP$ is open for further research and might yield an $\FPT$ result.

As a further research topic a solid definition of renamable horn and also qhorn formulas is of great interest. Furthermore the study of other operators beyond the investigated ones is open.

\bibliographystyle{plain}
\bibliography{bibliography}

%\section{Old Stuff}
%\begin{itemize}
%	\item Formel: $(x,y,z),(-x,-y,z),(x,-z)$ hat genau eine Belegung die qHorn ist: $x=z=0, y=1$
%	\item Wähle nun folgende LTL-A-Horn-Formel: $A(x,y,z),A(-x,-y,z),A(x,-z),A(Ax,Ay,Az),...,$ damit kann man erzwingen, dass die Übersetzung nicht funktioniert und q-Horn zerstört....
%	\item Modell welches n-welten erzwingt: $\bigwedge_{i=1}^np_i\land \bigwedge_{i=1}^n A(\lnot A(p_i))\land\bigwedge_{1\leq i,j\leq n}A(p_i\to\lnot p_j)$
%	\item Es gibt nur 3 Fälle die man berücksichtigen muss: pi=1 und Api=1 und -Api=0, pi=1 und Api=0 und -Api=0, pi=0 und APi=0 und -Api=1, führt zu $3^k$ vielen Fällen wir die Strong-Backdoor für LTL-CNF
%\end{itemize}

\end{document}